\pgfplotsset{compat=1.16}
\theoremstyle{definition}
\newtheorem{lemma}{Lemma}
\newtheorem{proposition}{Proposition}
\newtheorem{assumption}{Assumption}
\newtheorem*{rmk*}{Remark}
\newtheorem{rmk}{Remark}
\newtheorem{theorem}{Theorem}
\newcommand{\Var}{\text{Var}}
\newcommand{\ul}{\underline}
\newcommand{\ol}{\overline}
\begin{document}

\baselineskip=18pt

\title{Accuracy of Uniform Inference on Fine Grid Points}
\author[1]{Shunsuke Imai \thanks{This research was supported by JSPS KAKENHI Grant Number 24KJ1472. 
}}

\affil[1]{Graduate School of Economics, Kyoto University}

\maketitle

\begin{abstract}
Uniform confidence bands for functions are widely used in empirical analysis. 
A variety of simple implementation methods (most notably multiplier bootstrap) have been proposed and theoretically justified. 
However, an implementation over a literally continuous index set is generally computationally infeasible, and practitioners therefore compute the critical value by evaluating the statistic on a finite evaluation grid.
This paper quantifies how fine the evaluation grid must be for a multiplier bootstrap procedure over finite grid points to deliver valid uniform confidence bands.
We derive an explicit bound on the resulting coverage error that separates discretization effects from the intrinsic high-dimensional bootstrap approximation error on the grid. 
The bound yields a transparent workflow for choosing the grid size in practice, and we illustrate the implementation through an example of kernel density estimation.

\vspace{2mm}

\noindent \textit{Keywords}: 
bootstrap;
empirical process;
high-dimensional central limit theorem; 
kernel smoothing;
nonparametric inference;
uniform confidence bands;
\end{abstract}

\newpage
\section{Introduction} \label{sec:introduction}

Uniform confidence bands are a useful device for quantifying uncertainty of estimates on an unknown function over a continuum of points.
A long line of work on uniform inference in nonparametric problems dates back at least to early contributions such as \citet{bickel1973some} and related papers.

Nowadays, by virtue of the seminal studies by \cite{chernozhukov2014anti,chernozhukov2014gaussian} (hereafter CCK) and their extensions like \cite{chernozhukov2016empirical,ChKa20}
, we can readily validate the uniform Gaussian approximation and the implementation of uniform critical value for a variety of possibly non-Donsker nonparametric estimators.  
Specifically, CCK established approximation result between suprema of empirical processes and suprema of Gaussian processes under fairly general setting.
Therefore, using the quantile of supremum of approximating Gaussian process as the critical value results in the confidence band which attains asymptotically correct coverage.
More importantly for practical implementation, CCK also show that multiplier bootstrap-based critical values can be used as a feasible critical value and yields asymptotically correct coverage.
CCK theory has been widely used in applied econometrics and statistics to justify both analytic (Gaussian) critical values and bootstrap-based critical values for uniform inference, for example \cite{lee2017doubly,kato2018uniform,fan2022estimation,imai2025doubly} to name but a few.

However, in practice, neither supremum of Gaussian process nor supremum of bootstrap statistic over continuous interval is computationally feasible.
Accordingly, implementations replace supremum over continuous interval by maximum over a finite evaluation grid and compute critical values from the corresponding maximum of bootstrap statistics over finite grid points.
Then, the natural question is how fine the evaluation grid must be in order for the resulting procedure to remain theoretically valid while being computationally feasible.
In this study, we address the problem.

This paper provides a quantitative answer to this question. 
More precisely, as \cref{thm:boot_disc_error}, we quantify how well the conditional $(1-\alpha)$-quantile of the grid maximum of the multiplier-bootstrap statistics (the critical value used in practice) approximates the infeasible $(1-\alpha)$-quantile of the supremum of the studentized statistic over continuous interval.
The key feature of \cref{thm:boot_disc_error} is that it separates the approximation error to two distinct sources of error. 
One is a discretization error arising from replacing the continuum supremum of the studentized statistic by its maximum over the grid. 
This discretization component has two elements.
First, the discretization error includes a term that captures the probability of the existence of sharp local spikes between grid points and the grid maximum can miss such peaks.
Second, even when no sharp between-grid spikes occur, a coarse grid can still miss the true peak simply because the between-grid space is too sparse.
The discretization error therefore includes an explicit mesh condition requiring the grid to be fine enough that this miss due to sparsity is negligible.
The other reflects how accurately the multiplier bootstrap approximates the distribution of the finite-dimensional vector of studentized statistics over fine grid points. 
This term is controlled by Gaussian and multiplier-bootstrap approximation theory for high-dimensional maxima by \cite{CCKK22}. 
This decomposition directly translates into a practical recipe for choosing the grid size; see \cref{rmk:implementation} for a detailed implementation workflow.

To illustrate how the bound in \cref{thm:boot_disc_error} can be used in practice, \cref{sec:KDE} works out a kernel density estimation example under standard regularity conditions. 
The example translates the abstract ingredients of the bound into familiar quantities and yields a simple, implementable rule for choosing the grid size.

\paragraph{Organization:}
The remainder of the paper is organized as follows.  \cref{sec:main} introduces the general framework and presents the main coverage error bound for grid-based multiplier-bootstrap uniform inference.
 \cref{sec:KDE} develops and illustration with kernel density estimators. 
 \cref{sec:proof} contains the proofs.

\section{Main Results} \label{sec:main}
Suppose that we are interested in construction of uniform confidence band for some function $f$ over the compact region $\mathcal{X} \coloneqq[\ul{x}, \ol{x}] \subset \mathbb{R}$ and we have an estimator for $f$ with the sum of independent variables form $\hat{f}_{h_n}(x) \coloneqq n^{-1}\sum_{i=1}^n\psi_{h_n}(X_i,x) $ and
\begin{align}
    \hat{f}_{h_n}(x) - \mathbb{E}[\hat f_{h_n}(x)] = \frac{1}{n}\sum_{i=1}^n \{\psi_{h_n}(X_i,x) - \mathbb{E}[\psi_{h_n}(X_i,x)]\}, \label{eq:linear}
\end{align}
where $\{X_i\}_{i=1}^n$ is a sample from i.i.d.~observations and $h_n$ denotes a tuning parameter possibly depending on $n$.
\begin{rmk}
To keep the paper concise and focused on the error caused by the discretization, we take $\mathbb{E}[\hat{f}_{h_n}(x)]$ as the baseline target.
However, in practice, the parameter of interest is typically $f(x)$ rather than the pseudo-true target $\mathbb{E}[\hat{f}_{h_n}(x)]$.
In that case, one may write
\begin{align*}
    \hat{f}_{h_n}(x) - f(x) \coloneqq \frac{1}{n}\sum_{i=1}^n \{\psi_{h_n}(X_i,x) - \mathbb{E}[\psi_{h_n}(X_i,x)]\} + r_n(x),
\end{align*}
where $r_n(x)$ represents the bias term (or, more generally, a linearization remainder). 
Nevertheless, the results below continue to hold provided $r_n(x)$ satisfies negligiblity condition uniformly over $\mathcal{X}$ relative to the stochastic term (cf. Lemma 1 in \citealp{chernozhukov2023high}). 
\end{rmk}
Assume that $\sigma_n^2(x)\coloneqq n^{-1}\Var[ \psi_{h_n}(X_i,x)]$ and let $\hat{\sigma}^2_n(x)$ be its estimator:
\begin{align}
    \hat{\sigma}^2_n(x) \coloneqq \frac{1}{n} \left[ \frac{1}{n}\sum_{i=1}^n \psi_{h_n}^2(X_i,x) - \left( \frac{1}{n}\sum_{i=1}^n \psi_{h_n}(X_i,x)\right)^2 \right]. \label{eq:variance_estimator}
\end{align}
Define the standardized and studentized estimator as
\begin{align*}
    T_n(x) \coloneqq \frac{\hat{f}_{h_n}(x) - \mathbb{E}[\hat f_{h_n}(x)]}{\sigma_n(x)}, \quad \hat{T}_n(x) \coloneqq \frac{\hat{f}_{h_n}(x) - \mathbb{E}[\hat f_{h_n}(x)]}{\hat\sigma_n(x)},
\end{align*}
respectively.
For each bootstrap iteration, we  generate sets of i.i.d.~bootstrap weights $\{w_i^{\star}\}_{i=1}^n$ independently of the original data and compute
\begin{align*}
    \hat T_n^{\star}(x) \coloneqq \frac{1}{n\hat\sigma_n(x)}\sum_{i=1}^n w_i^{\star}  \{\psi_{h_n}(X_i,x) - \hat{f}_{h_n}(x)\}.
\end{align*}
For simplicity, we suppose that  $\{w_i^{\star}\}_{i=1}^n$ follows the standard normal distribution.
Ideally, we would construct the critical value based on the conditional $(1-\alpha)$-quantile of the supremum of the bootstrap process over the continuum,
\begin{align*}
    \hat{c}_{n,1-\alpha}^{\star, \texttt{infeasible}} \coloneqq \inf\left\{  t\in\mathbb{R} : \mathbb{P}^\star\left( \sup_{x\in\mathcal{X}}|\hat T_n^\star(x)| \le t \mid X_{1:n} \right) \ge 1-\alpha\right\},
\end{align*}
where $\mathbb{P}^\star(\cdot \mid X_{1:n})$  denotes probability with respect to the bootstrap weights conditional on the original sample.
However, evaluating $\sup_{x\in\mathcal{X}}|\hat T_n^\star(x)|$ is not computationally feasible in general.
Therefore, we approximate the continuum supremum by a maximum over a finite grid.

To proceed, we introduce some notation about discrete grids.
Let $\delta_n>0$ be a mesh size. 
Set grid points $\{x_j\}_{j=1}^{p}\subset\mathcal X$ by
\[
x_j := \ul x + (j-1)\delta_n,\quad j=1,\dots,p-1,
~~~ x_{p} := \ol x, \text{ with } p := \left\lfloor \frac{\ol x-\ul x}{\delta_n}\right\rfloor+2.
\] 
We write $\mathcal X_{\delta_n}:=\{x_1,\dots,x_{p}\}$.
Define the maximum gap as 
\begin{align*}
    \Delta_n := \max_{1\le j\le p-1}(x_{j+1}-x_j).
\end{align*}
Note that $x_1=\ul x$ and $x_{p}=\ol x$, and that the maximum gap satisfies $\Delta_n \le \delta_n$, since $x_{j+1}-x_j=\delta_n$ for $j\le p-2$ and $x_{p}-x_{p-1}\in(0,\delta_n]$.

In this study, we use the following feasible bootstrap critical value, which approximates the (infeasible) critical value based on $\sup_{x\in\mathcal{X}}|\hat T_n^\star(x)|$.
\begin{align*}
     \hat{c}_{n,1-\alpha}^{\star} \coloneqq \inf\left\{  t\in\mathbb{R} : \mathbb{P}^\star\left( \max_{1\le j \le p}|\hat T_n^\star(x_j)| \le t \mid X_{1:n} \right) \ge 1-\alpha\right\},
\end{align*}
and investigate the quality of this approximation and provide conditions under which the resulting critical value delivers valid uniform inference.
To this end, we introduce the following assumptions.

\begin{assumption} \label{as:DGP_high-dim} \mbox{}
\begin{itemize}
    \item[(i)] $\{X_i\}_{i=1}^n$ is a sample from i.i.d.~observations.
    \item[(ii)] There exists some constant $B_n\ge 1$ such that   $ \|Y_{i,h_n}(x_j)\|_{\psi_1} \le B_n$ and $ \mathbb{E}[Y_{i,h_n}^4(x_j)] \le B_n^2$ for all $1\le i\le n$ and $1\le j \le p$, with
    \begin{align*}
        Y_{i,h_n}(x) \coloneqq \frac{\psi_{h_n}(X_i,x) - \mathbb{E}[\psi_{h_n}(X_i,x)]}{\sqrt{\Var[\psi_{h_n}(X_i,x)]}},
    \end{align*}
    and $\|\xi\|_{\psi_1} \coloneqq \inf \{C>0 : \mathbb{E}[\exp(|\xi|/C)] \le 2\}$.
    \item[(iii)]  The estimator $\hat{f}_{h_n}(x)$ has a linear form in \cref{eq:linear} uniformly over $\mathcal{X}_{\delta_n}$.
    \item[(iv)] $\inf_{1\le j \le p}\Var[\psi_{h_n}(X_i,x_j)] \ge c > 0$.
    \item[(v)]  There exists a deterministic sequence $L_n$ and $\varepsilon_n \downarrow 0$ such that
    \begin{align*}
        \mathbb{P}\left(\sup_{\{(x,y):|x-y|\le \Delta_n } \frac{|\hat{T}_n(x) - \hat{T}_n(y)|}{|x-y|} > L_n\right) \le \varepsilon_n.
    \end{align*}
\end{itemize}
\end{assumption}

Two assumptions (ii) and (v) deserve further explanation.
\cref{as:DGP_high-dim}(ii) is imposed to control the high-dimensional Gaussian approximation and multiplier-bootstrap approximation errors using the results of \citet{CCKK22}. 
See conditions E and M in \cite{CCKK22}.
\cref{as:DGP_high-dim}(v) is a high-level condition that controls the discretization error.
We discuss \cref{as:DGP_high-dim}(v) in detail later and provide primitive sufficient conditions for \cref{as:DGP_high-dim}(v) in the KDE example in \cref{subsec:KDE_theory}.

The following theorem  provides an explicit bound on the approximation error that arises when the $(1-\alpha)$-quantile of the supremum statistic $\sup_{x\in\mathcal{X}} |\hat{T}_n(x)| $ is approximated by the feasible bootstrap critical value $\hat{c}^\star_{n,1-\alpha} $ computed from the grid-based bootstrap maximum $\max_{1\le j \le p}|\hat T_n^\star(x_j)|$.

\begin{theorem} \label{thm:boot_disc_error}
Under \cref{as:DGP_high-dim}, letting $r \coloneqq 2\left\{n^{-1}B_n^2 \log^3(np)\right\}^{1/4}$, it holds that
\begin{align*}
    & \left|\mathbb{P}\left( \sup_{x\in\mathcal{X}} |\hat{T}_n(x)| \le \hat{c}^\star_{n,1-\alpha} \right) - (1-\alpha) \right| \\
    & \quad \le 3\Big(\varepsilon_n + 1\{L_n\Delta_n/2 > r\} \Big) + O\left(\left( \frac{B_n^2 \log^5(np)}{n}\right)^{1/4}\right).
\end{align*}
The proof of this theorem is provided in \cref{sec:proof_main}.
\end{theorem}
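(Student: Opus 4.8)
The plan is to interpolate between the target probability $\mathbb{P}(\sup_{x\in\mathcal X}|\hat T_n(x)|\le \hat c^\star_{n,1-\alpha})$ and $1-\alpha$ through a chain of comparisons, peeling off the discretization error first and then invoking the high-dimensional Gaussian/bootstrap machinery of \cite{CCKK22} on the fixed $p$-dimensional grid. Write $S_n:=\sup_{x\in\mathcal X}|\hat T_n(x)|$ and $M_n:=\max_{1\le j\le p}|\hat T_n(x_j)|$, and similarly $M_n^\star:=\max_{1\le j\le p}|\hat T_n^\star(x_j)|$. Clearly $M_n\le S_n$, so on one side $\{S_n\le t\}\subseteq\{M_n\le t\}$ for every $t$; the reverse inclusion is where the modulus-of-continuity condition \cref{as:DGP_high-dim}(v) enters: on the event that the local Lipschitz constant of $\hat T_n$ over gaps of size $\le\Delta_n$ is at most $L_n$ (which fails with probability $\le\varepsilon_n$), we have $S_n\le M_n + L_n\Delta_n/2$, because any $x$ lies within $\Delta_n/2$ of some grid point (using that $x_1=\ul x$, $x_p=\ol x$, and the endpoints are grid points). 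Hence, up to the additive slack $\varepsilon_n$ in probability, $S_n$ and $M_n$ differ by at most $L_n\Delta_n/2$ deterministically.

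Next I would translate the "$L_n\Delta_n/2$" slack into the quantity $r=2\{n^{-1}B_n^2\log^3(np)\}^{1/4}$ via an anti-concentration argument. The point of introducing $r$ is that the grid maximum $M_n$ and its bootstrap counterpart have, by the CCKK22 theory, densities that are bounded near the relevant quantile, so that shifting the threshold by $r$ changes the probability by $O(r)=O((n^{-1}B_n^2\log^3(np))^{1/4})$, which is of the same order as the $(B_n^2\log^5(np)/n)^{1/4}$ remainder (indeed smaller, since $\log^3\le\log^5$). When $L_n\Delta_n/2\le r$ the Lipschitz slack is already dominated by this anti-concentration budget and costs nothing extra; when $L_n\Delta_n/2>r$ we cannot absorb it and simply bound the corresponding probability discrepancy by $1$, which is the source of the indicator term $1\{L_n\Delta_n/2>r\}$. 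Carrying the $\le$ and $\ge$ directions of the quantile definition of $\hat c^\star_{n,1-\alpha}$ through these comparisons produces the factor $3$ in front of $\varepsilon_n+1\{L_n\Delta_n/2>r\}$ (roughly: one copy from replacing $S_n$ by $M_n$ on each side of the two-sided bound, plus one from the interplay between the bootstrap quantile and the finite-sample law of $M_n$).

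The analytic core is then a purely finite-dimensional statement about the $p$-vector $(\hat T_n(x_1),\dots,\hat T_n(x_p))$ and its Gaussian multiplier-bootstrap analogue. Here I would verify that \cref{as:DGP_high-dim}(i)–(iv) deliver exactly conditions E and M of \cite{CCKK22}: part (ii) gives the $\psi_1$-Orlicz and fourth-moment envelope bounds on the standardized summands $Y_{i,h_n}(x_j)$; part (iv) gives the uniform nondegeneracy of the coordinate variances; and the studentization by $\hat\sigma_n$ rather than $\sigma_n$ is handled by a standard ratio argument, controlling $\sup_j|\hat\sigma_n^2(x_j)/\sigma_n^2(x_j)-1|$ with high probability using the fourth-moment bound, which again feeds an anti-concentration step. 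Invoking the CCKK22 Gaussian comparison and multiplier-bootstrap validity for maxima over $p$ coordinates then yields $|\mathbb{P}(M_n\le \hat c^\star_{n,1-\alpha})-(1-\alpha)|=O((B_n^2\log^5(np)/n)^{1/4})$, after which the discretization reductions of the previous paragraph upgrade $M_n$ to $S_n$ and complete the bound.

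I expect the main obstacle to be bookkeeping the quantile comparisons so that all the slacks ($\varepsilon_n$, $L_n\Delta_n/2$ vs.\ $r$, the $\hat\sigma_n/\sigma_n$ ratio, and the CCKK22 error) line up additively with the correct constants, rather than any single deep estimate; in particular one must be careful that the anti-concentration bound used to absorb the $r$-shift is applied to the \emph{correct} distribution (the feasible law of $M_n$, or equivalently its Gaussian approximation) and that its constant is independent of $n$ and $p$, which is precisely what nondegeneracy in \cref{as:DGP_high-dim}(iv) buys.
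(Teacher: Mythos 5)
Your proposal is essentially correct and follows the same route as the paper: bound the Kolmogorov distance between the conditional law of the bootstrap grid maximum and the true law of the continuum supremum, by peeling off the discretization error ($S_n\le M_n+L_n\Delta_n/2$ on the good event, costing $\varepsilon_n$), absorbing the $L_n\Delta_n/2$ slack against the anti-concentration budget $r$ (yielding the indicator term when absorption fails), handling studentization via the $\hat\sigma_n/\sigma_n$ ratio as in CCKK22 Lemma~4.2, and invoking CCKK22 Gaussian comparison (Theorem~2.1) and multiplier-bootstrap validity (Lemma~4.6) for the $p$-vector; the paper then converts the Kolmogorov distance bound to the coverage error with factor $3$ by citing Step~3 of the proof of Theorem~2 in \cite{kato2018uniform}, which is what your ``quantile bookkeeping'' sketch amounts to. The one imprecision to flag: the anti-concentration step is applied via Nazarov's inequality to the \emph{Gaussian approximant} of the grid vector, not to a ``bounded density of $M_n$'' (a max of $p$ dependent coordinates need not have one), and this costs $O(r\sqrt{\log p})=O\bigl((B_n^2\log^5(np)/n)^{1/4}\bigr)$ rather than $O(r)$ --- that extra $\sqrt{\log p}$ is precisely the source of the $\log^5$ exponent in the remainder term, so it is of the same order as, not smaller than, the stated error.
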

We first provide a detailed explanation on three terms in the right hand side of \cref{thm:boot_disc_error}, and then present a workflow for valid implementation (\cref{rmk:implementation}).

The first two terms on the right-hand side of \cref{thm:boot_disc_error} quantify the discretization error arising from replacing the infeasible continuum supremum by a maximum over the finite grid. The final term corresponds to the high-dimensional Gaussian approximation and multiplier bootstrap approximation error on the grid, whose rate follows from the results of \cite{CCKK22}.

The first term captures the probability that $\hat{T}_n(x)$ exhibits rapid local variation between nearby points, so that the grid maximum may miss a sharp peak occurring between grid points.
To see why this term appears, recall that the grid spacing is at most $\Delta_n$.
Hence, for any location $x\in\mathcal{X}$ there exists a grid point $x_j$ within distance of $\Delta_n$.
If $\hat{T}_n(x)$ does not change too much over such short distances, then the value of $\hat{T}_n(x)$ is well approximated by $\hat{T}_n(x_j)$ and consequently the continuum supremum $\sup_{x\in\mathcal{X}}|\hat{T}_n(x)|$ is well approximated by the grid maximum $\max_{1\le j\le p}|\hat{T}_n(x_j)|$.
More precisely, the quantity 
\begin{align*}
    \sup_{\{(x,y):|x-y|\le \Delta_n } \frac{|\hat{T}_{n}(x) - \hat{T}_{n}(y)|}{|x-y|} 
\end{align*}
measures the largest change of $\hat{T}_n$ per unit change in $x$  over distances up to $\Delta_n$.
When this local slope is bounded by $L_n$ moving from any $x$ to a nearby grid point changes $\hat{T}_n$ by at most $L_n\Delta_n/2$.
Therefore, any peak that occurs between grid points cannot exceed the observed grid maximum by more than $L_n\Delta_n/2$; the first term is the probability that this favorable behavior fails.

The second term is a deterministic grid-design condition that complements the first term.
As explained above, when the local slope of $\hat{T}_n$ over distances up to $\Delta_n$ is bounded by $L_n$, moving from any location $x\in\mathcal{X}$ to a nearby grid point $x_j$ changes $\hat{T}_n$ by at most $L_n\Delta_n/2$.
Consequently, on this event,  it holds that
\begin{align*}
    \sup_{x\in\mathcal{X}}|\hat{T}_n(x)| \le \max_{1\le j\le p}|\hat{T}_n(x_j)| + L_n\Delta_n/2,
\end{align*}
so the discrepancy between the continuum supremum and the grid maximum is controlled by the worst-case between-grid fluctuation $L_n\Delta_n/2$.
The indicator $1\{L_n\Delta_n/2 > r\}$ checks whether this worst-case fluctuation is small enough relative to the tolerance level $r$, which stems from the high-dimensional Gaussian approximation error and an anti-concentration inequality used in the proof of high-dimensional Gaussian approximation and bootstrap approximation. 
If $L_n\Delta_n/2 \le r$, then the discretization error contributed by replacing the continuum supremum with the grid maximum is guaranteed to be no larger than $r$ (up to the probability of the event captured by the first term).
In practice, this term provides a transparent guideline for choosing the grid: one should select the mesh fine enough so that $L_n\Delta_n/2$ is dominated by $r$.

The last term reflects how accurately the multiplier bootstrap based on the maximum over the grids approximates the distribution of the target statistic after reducing the continuum supremum to a finite-dimensional problem. The rate is inherited from the high-dimensional Gaussian approximation and multiplier bootstrap theory developed in \cite{CCKK22}.
In \cite[pp.~2375]{chernozhukov2023nearly}, they conjecture that the bound is ``near-optimal when the covariance is unrestricted".
A key feature of \cite{CCKK22} is that it does not require the covariance matrix of the high-dimensional vector $(\hat{T}_n(x_1), \dots, \hat{T}_n(x_p))$ to be uniformly invertible. 
This robustness is particularly relevant in nonparametric applications, where the grid becomes dense, nearby evaluations of the estimator are typically strongly correlated, and the resulting finite-dimensional covariance matrix  may approach singularity as $p$ increases.
For this reason, it is a reasonable choice to benchmark the bootstrap approximation error under the unrestricted covariance rate delivered by \cite{CCKK22}. 
It should be desirable, for practical implementation, to characterize the constant in the Gaussian and multiplier bootstrap approximation error by \cite{CCKK22}.
Nevertheless, developing such a result appears technically demanding and we leave it for future work.

\begin{rmk}[Workflow for valid implementation]\label{rmk:implementation}
\cref{thm:boot_disc_error} provides a quantitative criterion for how fine the grid needs to be. 
The following workflow summarizes how to choose the grid size in a transparent manner.
\begin{enumerate}
    \item \textbf{Choose a tolerance level $\varepsilon_n$ for excursion probability of local variation:}
    The quantity $\varepsilon_n$ bounds the probability of the event that the studentized statistic varies too abruptly over distances comparable to the grid gap, so that a narrow between-grid spike may be missed by the grid maximum.
    If one only targets first-order validity, it suffices to take $\varepsilon_n=o(1)$.
    If one aims to make discretization negligible relative to the intrinsic high-dimensional approximation error (the last term in \cref{thm:boot_disc_error}), a natural choice is to take $\varepsilon_n$ asymptotically smaller than that term.
    
    \item \textbf{Determine the implied local variation bound $L_n$:}
    Once $\varepsilon_n$ is determined, Assumption~\ref{as:DGP_high-dim}(v) specifies a corresponding bound $L_n$ on how rapidly the studentized statistic can change locally (over distances up to the grid gap).
    In applications, $L_n$ can be  obtained from properties of the estimator and the data generating process (DGP).

    \item \textbf{Determine $B_n$ and the implied threshold level $r$:}
    The quantity $B_n$ is determined by tail and moment properties of the normalized summands in Assumption~\ref{as:DGP_high-dim}(ii).
    Once $B_n$ is determined, the threshold level $r = 2\{n^{-1}B_n^2\log^3(np)\}^{1/4}$ in \cref{thm:boot_disc_error} is determined, for a given number of grid points $p$.

    \item \textbf{Choose the mesh size $\Delta_n$ to eliminate the indicator term:}
    Given $L_n$ and $r$, the indicator term in \cref{thm:boot_disc_error} vanishes once the grid is fine enough so that the worst-case between-grid discrepancy implied by the local variation bound is below the threshold level, namely $L_n\Delta_n/2\le r$.
    This observation yields an explicit rule: decrease $\Delta_n$ (or increase $p$) until the condition $L_n\Delta_n/2\le r$ is satisfied.
    With the indicator term eliminated, the coverage error reduces to the excursion probability bound $\varepsilon_n$ plus the last term in  \cref{thm:boot_disc_error}.

    \item[-] \textbf{Caution (Do not take the grid unnecessarily dense):}
    Although refining the grid reduces discretization error, it also increases $p$ and hence affects the high-dimensional bootstrap approximation error through $\log(np)$.
    Therefore, the grid should be chosen just fine enough to satisfy the mesh condition $L_n\Delta_n/2\le r$ , while keeping $p$ within a range for which the high-dimensional approximation remains valid, that is $n^{-1}B_n^2\log^{5}(np)\to 0$.
\end{enumerate}
\end{rmk}

In the next section, we provide a concrete implementation discussion through a KDE example.
For first-order validity, see \cref{prop:KDE_grid} and Remarks~\ref{rem:plug-in}--\ref{rmk:simple} for an explicit grid rule and practical guidance.
Regarding the stronger goal of making discretization negligible relative to the high-dimensional approximation error, though a fully constant-level prescription is currently out of reach, we record one simple order-based implementation strategy in \cref{rmk:simple}.

\section{Illustration: Kernel Density Estimator} \label{sec:KDE}
As an illustration, we consider the construction of uniform confidence bands for the expectation of kernel density estimator.

\subsection{Theoretical Analysis} \label{subsec:KDE_theory}
In this section, we illustrate how the quantities $B_n$ and $L_n$ are determined in the case of kernel density estimator (KDE) and how they in turn guide an appropriate choice of the grid spacing $\Delta_n$. 
KDE is defined as
\begin{align*}
    \hat{f}_{h_n}(x) \coloneqq \frac{1}{n}\sum_{i=1}^n K_{i,h_n}(x), \text{ with } K_{i,h_n}(x) \coloneqq \frac{1}{h_n}K\left( \frac{X_i-x}{h_n}\right),
\end{align*}
where $K$ is a kernel function and $h_n>0$ is a bandwidth which satisfy assumptions introduced later.
For notational simplicity, define 
\begin{align*}
    Y_{i,h_n}(x) \coloneqq \frac{1}{\sqrt{\Var[K_{i,h_n}(x)]}} \left\{ K_{i,h_n}(x) - \mathbb{E}[K_{i,h_n}(x)] \right\}.
\end{align*}

\begin{assumption}\label{as:KDE} \mbox{}
\begin{itemize}
    \item[(i)] $\{X_i\}_{i=1}^n$ is an i.i.d.~sample from a distribution with density $f$ and the density $f$ is  continuously differentiable, $\sup_{x\in\mathcal{X}}|f'(x)|<\infty$, and $0 < \inf_{x\in\mathcal{X}}f(x) < \sup_{x\in\mathcal{X}} f(x) < \infty$, where $f'(x) \coloneqq \partial f(x)/\partial x$.
    \item[(ii)] $\sigma_n(x)$ is continuously differentiable on $\mathcal X$ for all $n\ge 1$, and
    \begin{align*}
        0< \inf_{x\in\mathcal X} (nh_n)^{1/2}\sigma_n(x) \le \sup_{x\in\mathcal X} (nh_n)^{1/2}\sigma_n(x) < \infty, 
        \qquad
        \sup_{x\in\mathcal X}(nh_n)^{1/2}|\partial_x\sigma_n(x)|<\infty.
    \end{align*}
    \item[(iii)] The bandwidth $h_n>0$ satisfies $h_n \to 0$ and $nh_n^3 \to \infty$. 
    \item[(iv)] The kernel function $K:\mathbb{R}\to\mathbb{R}$ is twice continuously differentiable, $\int_{\mathbb{R}}K(u)du = 1$, $\sup_{u\in\mathbb{R}}|K^{(v)}(u)|<\infty$ for $v=0,1,2$ and
    \begin{align*}
        & \int_{\mathbb{R}}|K(u)|du < \infty, ~ \int_{\mathbb{R}}|u|K^2(u)du < \infty, ~ \int_{\mathbb{R}}K^4(u)du < \infty, \\
        &  \int_{\mathbb{R}}|K'(u)|du < \infty, ~ \int_{\mathbb{R}}\{K'(u)\}^2du < \infty,  \int_{\mathbb{R}}|K''(u)|du < \infty.
    \end{align*}
\end{itemize}
\end{assumption}
This kind of assumption is standard in the literature (cf. \citealp{wasserman2006all,li2007nonparametric,tsybakov2008nonparametric}, for example).
The condition $nh_n^3\to\infty$ in \cref{as:KDE}(iii) is stronger than the most standard condition $nh_n \to \infty$ for the analysis of KDE, but this restriction is not unduly strong for practical purposes, since commonly used bandwidth choices such as MISE-optimal rate ($h_n \asymp n^{-1/5}$) satisfy $nh_n^3\to\infty$.

The following lemma provides explicit quantities needed for the grid selection in the KDE example.
In particular, \cref{lem:KDE}(ii) provides a valid choice of $B_n$ in \cref{as:DGP_high-dim}(ii), while \cref{lem:KDE}(iii) verifies \cref{as:DGP_high-dim}(v) by constructing $L_n(\varepsilon_n)$. 
We use \cref{lem:KDE}(i) in the discussion on an implementation issue later.
\begin{lemma} \label{lem:KDE} \mbox{}
Under \cref{as:KDE}, the following statements hold.
\begin{itemize}
    \item[(i)] For all $x\in\mathcal{X}$, $  \Var[K_{i,h_n}(x)] =  h_n^{-1}f(x)\int K^2(u)du + o(h_n^{-1})$.
    \item[(ii)] $\max_{1\le j \le p}\mathbb{E}[Y_{i,h_n}^4(x_j)]$ and $\max_{1\le j \le p}\| Y_{i,h_n}(x_j)\|_{\psi_1}$ are bounded as
    \begin{align*}
        & \max_{1\le j \le p}\mathbb{E}[Y_{i,h_n}^4(x_j)] \le \max_{1\le j \le p}\frac{16 \sup_{x\in\mathcal{X}}|f(x)|}{h_n^3 \Var[K_{i,h_n}(x_j)]^2} \int_{\mathbb{R}}K^4(u)du, \\
        & \max_{1\le j \le p}\| Y_{i,h_n}(x_j)\|_{\psi_1} \le  \max_{1\le j \le p}\frac{1}{\log 2} \cdot \frac{2\sup_{u\in\mathbb{R}}|K(u)|}{h_n\sqrt{\Var[K_{i,h_n}(x_j)]}}.
    \end{align*}
    \item[(iii)] For a given $\varepsilon_n\in(0,1)$, define $L_n(\varepsilon_n)$ as follows. Then, \cref{as:DGP_high-dim}(v) holds.  
    \begin{align*}
        L_n(\varepsilon_n) \coloneqq 2\left(\sup_{x\in\mathcal{X}}\frac{1}{\sigma_n(x)} A_1\left( \frac{\varepsilon_n}{2} \right) + \sup_{x\in\mathcal{X}}\frac{\partial\sigma_n(x)}{\partial x}\frac{ 1}{\sigma^2_n(x)} A_0\left( \frac{\varepsilon_n}{2} \right)  \right),
    \end{align*}
    with 
    \begin{align*}
        & A_0(\varepsilon_n) \coloneqq\sqrt{\frac{2v_0\log(2m_{n}/\varepsilon_n)}{n}} + \frac{M_0\log(2m_{n}/\varepsilon_n)}{3n} +  \frac{\sup_{u\in\mathbb{R}}|K'(u)|}{n^{1/2}h_n^{1/2}}, \\
        & A_1(\varepsilon_n) \coloneqq \sqrt{\frac{2v_1\log(2m_{n}/\varepsilon_n)}{n}} + \frac{M_1\log(2m_{n}/\varepsilon_n)}{3n} +  \frac{\sup_{u\in\mathbb{R}}|K''(u)|}{n^{1/2}h_n^{3/2}},
    \end{align*}
    and
    \begin{align*}
        & m_{n} \coloneqq \lfloor n^{1/2}h_n^{-3/2}(\ol{x} - \ul{x}) \rfloor  + 2, \\
        & M_0 \coloneqq 2h_n^{-1}\sup_{u\in\mathbb{R}}|K(u)|, ~~ v_0 \coloneqq h_n^{-1}\sup_{x\in\mathcal{X}}|f(x)|\int_{\mathbb{R}}\{K(u)\}^2du, \\
        & M_1 \coloneqq 2h_n^{-2}\sup_{u\in\mathbb{R}}|K'(u)|, ~~ v_1 \coloneqq h_n^{-3}\sup_{x\in\mathcal{X}}|f(x)|\int_{\mathbb{R}}\{K'(u)\}^2du .
    \end{align*}
\end{itemize}
The proof of this lemma is provided in \cref{sec:proof:lem:KDE}.
\end{lemma}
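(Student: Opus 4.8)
Parts (i) and (ii) are short kernel-smoothing computations. For (i), substituting $v=(u-x)/h_n$ gives $\mathbb{E}[K_{i,h_n}^2(x)]=h_n^{-1}\int K^2(v)f(x+h_nv)\,dv$ and $\mathbb{E}[K_{i,h_n}(x)]=\int K(v)f(x+h_nv)\,dv$; continuity and boundedness of $f$ together with $\int K^2<\infty$ yield, by dominated convergence, $h_n\mathbb{E}[K_{i,h_n}^2(x)]=f(x)\int K^2+o(1)$ and $\mathbb{E}[K_{i,h_n}(x)]=O(1)$, hence the stated expansion of $\Var[K_{i,h_n}(x)]$. For (ii), write $\mathbb{E}[Y_{i,h_n}^4(x)]=\Var[K_{i,h_n}(x)]^{-2}\,\mathbb{E}[(K_{i,h_n}(x)-\mathbb{E}[K_{i,h_n}(x)])^4]$, bound the centered fourth moment by $16\,\mathbb{E}[K_{i,h_n}^4(x)]$ using $(a+b)^4\le8(a^4+b^4)$ and Jensen, and bound $\mathbb{E}[K_{i,h_n}^4(x)]=h_n^{-3}\int K^4(v)f(x+h_nv)\,dv\le h_n^{-3}\sup|f|\int K^4$; the $\psi_1$ bound holds because $|Y_{i,h_n}(x)|\le b:=2h_n^{-1}\sup|K|/\sqrt{\Var[K_{i,h_n}(x)]}$ almost surely, and any $\xi$ with $|\xi|\le b$ satisfies $\mathbb{E}[e^{|\xi|/C}]\le e^{b/C}\le2$ at $C=b/\log2$, so $\|\xi\|_{\psi_1}\le b/\log2$. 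Taking maxima over $\{x_j\}$ gives the two displays.

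For (iii), I would first reduce the supremum of difference quotients to a supremum of a derivative. Under \cref{as:KDE} the kernel is $C^2$ and $\hat\sigma_n^2$ is continuous and, on an event of probability tending to one, bounded away from zero on $\mathcal{X}$ (by (i) it concentrates at $\sigma_n^2\gtrsim(nh_n)^{-1}$), so $\hat{T}_n\in C^1(\mathcal{X})$ there and the mean value theorem gives $\sup_{|x-y|\le\Delta_n}|\hat{T}_n(x)-\hat{T}_n(y)|/|x-y|\le\sup_{x\in\mathcal{X}}|\partial_x\hat{T}_n(x)|$; the proof in fact establishes the stronger global Lipschitz bound, without using the $\Delta_n$-locality. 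Writing $\hat g_n(x):=\hat f_{h_n}(x)-\mathbb{E}[\hat f_{h_n}(x)]$ and $\partial_x\hat{T}_n=\hat g_n'/\hat\sigma_n-\hat g_n\,\hat\sigma_n'/\hat\sigma_n^2$,
\[
\sup_x|\partial_x\hat{T}_n(x)|\le\sup_x\frac{1}{\hat\sigma_n(x)}\sup_x|\hat g_n'(x)|+\sup_x\frac{|\hat\sigma_n'(x)|}{\hat\sigma_n^2(x)}\sup_x|\hat g_n(x)|,
\]
and on a further high-probability event where $\hat\sigma_n^2$ and $\partial_x\hat\sigma_n^2$ are uniformly within a constant factor of $\sigma_n^2$ and $\partial_x\sigma_n^2$, the right-hand side is at most $2\big(\sup_x\frac{1}{\sigma_n(x)}\sup_x|\hat g_n'(x)|+\sup_x\frac{|\sigma_n'(x)|}{\sigma_n^2(x)}\sup_x|\hat g_n(x)|\big)$, which is the outer structure of $L_n(\varepsilon_n)$ (the leading $2$ coming from this step). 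It then suffices to prove $\sup_x|\hat g_n(x)|\le A_0(\varepsilon_n/2)$ and $\sup_x|\hat g_n'(x)|\le A_1(\varepsilon_n/2)$, each with probability at least $1-\varepsilon_n/2$.

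For these two bounds I would, for each of $\hat g_n$ and $\hat g_n'$, proceed in three steps. (a) The object is an average of i.i.d.\ centered bounded summands — $K_{i,h_n}(x)-\mathbb{E}[K_{i,h_n}(x)]$ for $\hat g_n$, and $-h_n^{-2}K'((X_i-x)/h_n)$ centered for $\hat g_n'$ — whose almost-sure bounds $M_0,M_1$ and variance proxies $v_0,v_1$ are read off exactly as in (i)--(ii). (b) Applying Bernstein's inequality at each of the $m_n\asymp n^{1/2}h_n^{-3/2}$ equispaced auxiliary points of $\mathcal{X}$, with a union bound calibrated to spend $\varepsilon_n/2$, produces the first two terms of $A_0$ and $A_1$. (c) Passing from the auxiliary-grid maximum back to the continuum supremum costs at most the auxiliary spacing $\asymp h_n^{3/2}/n^{1/2}$ times a deterministic Lipschitz bound $\sup_x|\hat g_n'(x)|\lesssim h_n^{-2}\sup|K'|$ (resp.\ $\sup_x|\hat g_n''(x)|\lesssim h_n^{-3}\sup|K''|$), where the sample parts dominate and the bias parts are of lower order by the change of variables; these residuals are exactly the third terms $\sup|K'|/(n^{1/2}h_n^{1/2})$ and $\sup|K''|/(n^{1/2}h_n^{3/2})$. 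Combining (a)--(c) and assembling with the $\hat\sigma_n$ bound from the previous paragraph yields $\sup_x|\partial_x\hat{T}_n(x)|\le L_n(\varepsilon_n)$ off an event of probability $\le\varepsilon_n$, i.e.\ \cref{as:DGP_high-dim}(v).

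The main obstacle is the bookkeeping in (iii). One must calibrate the auxiliary grid to be just fine enough — $m_n\asymp n^{1/2}h_n^{-3/2}$, which is what makes the deterministic between-point residual match the Bernstein fluctuation $\asymp(v_k\log m_n/n)^{1/2}$, the scaling forced by $v_1\asymp h_n^{-3}$ — and, more delicately, control the fluctuations of $\hat\sigma_n$ and $\partial_x\hat\sigma_n^2$ so that they are absorbed into the constant $2$ rather than enlarging $L_n(\varepsilon_n)$; this uniform consistency of $\hat\sigma_n^2$ and its derivative is where the strengthened bandwidth condition $nh_n^3\to\infty$ is used. The rest is routine kernel calculus and a textbook application of Bernstein's inequality with a union bound.
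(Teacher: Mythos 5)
Your proposal reproduces the paper's argument almost step for step: parts (i) and (ii) by the same change-of-variables, $c_r$/Jensen, and almost-sure-bound calculations, and part (iii) by the same chain (mean-value reduction to $\sup_x|\partial_x\hat T_n|$, interchange of $\partial_x$ and $\mathbb{E}$, uniform consistency of $\hat\sigma_n$ and $\partial_x\hat\sigma_n$ absorbed into the leading factor of $2$, then Bernstein plus union bound on an auxiliary grid of spacing $n^{-1/2}h_n^{3/2}$ with the deterministic Lipschitz residual supplying the third terms of $A_0$ and $A_1$). The only cosmetic deviation is your aside that the centering part of the Lipschitz bound is lower order; the paper instead uses the crude two-sided bound $2h_n^{-k}\sup|K^{(k)}|$, which is exactly what produces the stated constants in $A_0,A_1$.
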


\begin{rmk} \label{rem:negligible_L}
Ignoring logarithmic factors, we can see that
\begin{align*}
    & \sup_{x\in\mathcal{X}}\sigma_n^{-1}(x) \asymp (nh_n)^{1/2}, ~~A_1(\varepsilon_n)\asymp (nh_n^3)^{-1/2}, \\ & \sup_{x\in\mathcal{X}}\sigma_n^{-2}(x)\{\partial \sigma_n(x)/\partial x\} \asymp (nh_n)^{1/2}, ~~A_0(\varepsilon_n)\asymp (nh_n)^{-1/2}.
\end{align*}
This implies the second term in $L_n(\varepsilon_n)$ is negligible in comparison with the first term, in the case of KDE. 
Therefore we can work with
\begin{align*}
    \tilde{L}_n(\varepsilon_n) \coloneqq 2 \sup_{x\in\mathcal{X}}\sigma_n^{-1}(x) A_1(\varepsilon_n/2).
\end{align*}

While the second component is of smaller order in the KDE example (as shown above), its magnitude may depend on the estimator, and it need not be negligible in more general settings.
For this reason, we state the general bound with both terms.
\end{rmk}

Given \cref{lem:KDE} and \cref{rem:negligible_L}, we can proceed to determine the grid size.
For each mesh size $\delta\in(0,\ol{x}-\ul{x}]$, let $\mathcal X_\delta=\{x_1,\ldots,x_{p(\delta)}\}$
be the equispaced grid defined in Section~2.
In view of \cref{lem:KDE}(ii), we may take
\[
B_n=\sup_{x\in\mathcal{X}}\Big(B_{\psi,n}(x)\vee B_{4,n}(x)\Big),
\]
where
\[
B_{\psi,n}(x)=\frac{1}{\log 2}\cdot \frac{2\sup_{u\in\mathbb R}|K(u)|}{h_n\sqrt{\Var[K_{i,h_n}(x)]}},
\qquad
B_{4,n}(x)=\sqrt{\frac{16\sup_{t\in\mathcal X}|f(t)|}{h_n^3\Var[K_{i,h_n}(x)]^2}\int_{\mathbb R}K^4(u)\,du }.
\]
Inserting $p=p(\delta)$ and $B_n$ into the definition of $r$ in \cref{thm:boot_disc_error}
yields
\[
r(\delta)=2\left\{\frac{B_n^2\log^3\!\big(n\,p(\delta)\big)}{n}\right\}^{1/4}.
\]

The following proposition gives how to choose $\delta_n$ to ensure the first order validity of uniform inference.
\begin{proposition}\label{prop:KDE_grid}
Suppose that \cref{as:KDE} holds.
Let $\{\varepsilon_n\}_{n\ge 1}$ be a deterministic sequence with $\varepsilon_n\downarrow 0$, and set
$\tilde L_n(\varepsilon_n)$ as in \cref{rem:negligible_L}.
Choose
\[
\delta_n:=\sup\Big\{\delta\in(0,\ol{x}-\ul{x}]:\ \delta \le 2r(\delta)/\tilde L_n(\varepsilon_n)\Big\},
\]
and let $\Delta_n$ denote the maximum gap of the resulting grid $\mathcal X_{\delta_n}$ (as defined in \cref{sec:main}).
Then
\[
\frac{\tilde L_n(\varepsilon_n)\Delta_n}{2}\le r(\delta_n),
\]
and hence the indicator term $1\{\tilde L_n(\varepsilon_n)\Delta_n/2>r\}$ in \cref{thm:boot_disc_error} vanishes for this choice of grid.
\end{proposition}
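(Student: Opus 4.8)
The plan is to reduce the claim to a one-dimensional monotonicity/left-continuity argument for the map $\delta\mapsto r(\delta)$, combined with the elementary bound $\Delta_n\le\delta_n$ already recorded in \cref{sec:main}.

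\emph{Step 1: structure of $r(\cdot)$ and finiteness of $\tilde L_n$.} I would first observe that $r(\delta)$ depends on $\delta$ only through the grid count $p(\delta)=\lfloor(\ol x-\ul x)/\delta\rfloor+2$, which is a non-increasing step function of $\delta$: it is constant and equal to $k+2$ on each interval $I_k:=((\ol x-\ul x)/(k+1),\,(\ol x-\ul x)/k]$ for positive integers $k$, and it is left-continuous (its only jumps are at $\delta=(\ol x-\ul x)/m$ for positive integers $m$, where the value agrees with the left limit). Hence $\delta\mapsto r(\delta)=2\{n^{-1}B_n^2\log^3(np(\delta))\}^{1/4}$ is non-increasing and left-continuous on $(0,\ol x-\ul x]$, and, for fixed $n$, $r(\delta)\to\infty$ as $\delta\downarrow0$ since $p(\delta)\to\infty$. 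I would also note that $\tilde L_n(\varepsilon_n)$ is a fixed, strictly positive and finite deterministic number: under \cref{as:KDE}(ii), $\sup_{x\in\mathcal X}\sigma_n^{-1}(x)<\infty$, and $A_1(\varepsilon_n/2)<\infty$ for $\varepsilon_n\in(0,1)$.

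\emph{Step 2: the defining supremum satisfies the constraint.} Set $\phi(\delta):=2r(\delta)/\tilde L_n(\varepsilon_n)-\delta$ on $(0,\ol x-\ul x]$; note $\delta\le 2r(\delta)/\tilde L_n(\varepsilon_n)$ is equivalent to $\phi(\delta)\ge0$, so $S:=\{\delta\in(0,\ol x-\ul x]:\phi(\delta)\ge0\}$ is exactly the set in the statement. By Step 1, $\phi$ is left-continuous and $\phi(\delta)>0$ for all sufficiently small $\delta$, so $S$ is nonempty and $\delta_n=\sup S\in(0,\ol x-\ul x]$ is well defined. I claim $\phi(\delta_n)\ge0$: if instead $\phi(\delta_n)<0$, left-continuity would give $\eta>0$ with $\phi<0$ on $(\delta_n-\eta,\delta_n]$, hence $S\cap(\delta_n-\eta,\delta_n]=\emptyset$ and $\sup S\le\delta_n-\eta<\delta_n$, a contradiction. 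Therefore $\phi(\delta_n)\ge0$, i.e.\ $\tilde L_n(\varepsilon_n)\delta_n/2\le r(\delta_n)$.

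\emph{Step 3: conclude.} The grid under consideration is $\mathcal X_{\delta_n}$, whose maximum gap satisfies $\Delta_n\le\delta_n$ by the observation in \cref{sec:main}; since $\tilde L_n(\varepsilon_n)>0$, Step 2 yields $\tilde L_n(\varepsilon_n)\Delta_n/2\le\tilde L_n(\varepsilon_n)\delta_n/2\le r(\delta_n)$. The relevant number of grid points in \cref{thm:boot_disc_error} for this grid is $p=p(\delta_n)$, with associated threshold $r=r(\delta_n)$, so the displayed inequality is precisely $\tilde L_n(\varepsilon_n)\Delta_n/2\le r$, and the indicator $1\{\tilde L_n(\varepsilon_n)\Delta_n/2>r\}$ equals $0$.

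\emph{Main obstacle.} There is no real difficulty here; the only point needing care is that $r$ is not a free constant but depends on the mesh through $p(\delta)$, so one must verify that the supremum $\delta_n$ genuinely satisfies the implicit inequality $\tilde L_n(\varepsilon_n)\delta_n/2\le r(\delta_n)$ rather than merely approaching it — this is exactly what the piecewise-constant, left-continuous structure of $p(\delta)$ delivers in Step 2. A minor bookkeeping point is to confirm $\tilde L_n(\varepsilon_n)\in(0,\infty)$ under \cref{as:KDE}, so that all divisions and monotonicity comparisons are legitimate.
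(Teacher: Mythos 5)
Your proof is correct and follows the same chain of inequalities as the paper's proof: $\Delta_n\le\delta_n$ combined with $\tilde L_n(\varepsilon_n)\delta_n/2\le r(\delta_n)$. You go further than the paper, which simply asserts the second inequality ``by construction,'' by carefully verifying that the defining supremum $\delta_n$ is actually attained in the feasible set $S$; your observation that $p(\delta)$, hence $r(\delta)$ and $\phi(\delta)$, is left-continuous and piecewise constant is exactly the right structural fact, since a supremum of a constraint set need not be feasible without such structure, and this closes a small gap the paper leaves implicit.
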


\begin{proof}
Since the grid is equispaced, $\Delta_n\le \delta_n$, and $\tilde L_n(\varepsilon_n)\delta_n/2\le r(\delta_n)$  by construction.
Therefore $\tilde L_n(\varepsilon_n)\Delta_n/2\le \tilde L_n(\varepsilon_n)\delta_n/2\le r(\delta_n)$.
\end{proof}

We now discuss practical aspects of implementing Proposition~\ref{prop:KDE_grid}. 
In particular, we highlight several bottlenecks that arise in applications in \cref{rem:plug-in} and \cref{rmk:implementation_delta}. 
Also \cref{rmk:simple} provides a simple implementation procedure.
 
\begin{rmk}[Implementation Issue 1 : Unknown Quantities] \label{rem:plug-in}
The quantities in \cref{prop:KDE_grid} involves unknown distributional quantities such as $\sup_{x\in\mathcal X}f(x)$, $\sup_{x\in\mathcal{X}}\sigma_n^{-1}(x)$ and $\sup_{x\in\mathcal{X}}\{\Var[K_{i,h_n}(x)]\}^{-1}$.
A feasible implementation can be obtained by replacing these unknown objects with plug-in estimators.
Note that Lemma~\ref{lem:KDE}(i) states that $\Var[K_{i,h_n}(x)] = h_n^{-1}f(x)\int K^2(u)du + o(h_n^{-1})$.
Accordingly, we can approximate $\sup_{x\in\mathcal{X}}\{\Var[K_{i,h_n}(x)]\}^{-1} \approx  h_n   \{\int_{\mathbb{R}} K^2(u)du\cdot\inf_{x\in\mathcal{X}}f(x)\}^{-1} $ .
Similarly, since $\sigma_n(x)^2=n^{-1}\Var[K_{i,h_n}(x)]$, we have
$\sup_{x\in\mathcal X}\sigma_n(x)^{-1}\approx (nh_n   \{\int_{\mathbb{R}} K^2(u)du\cdot\inf_{x\in\mathcal{X}}f(x)\}^{-1})^{1/2}$.
Therefore, it suffices to estimate $\sup_{x\in\mathcal X}f(x)$ and $\inf_{x\in\mathcal X}f(x)$.
Although there are many possible ways, one may use maximum and minimum of histogram estimators with a default binning rule provided by standard software packages.
\end{rmk}

\begin{rmk}[Implementation Issue 2 : Implementation of $\delta_n$] \label{rmk:implementation_delta}
The grid rule in \cref{prop:KDE_grid} is defined through the constraint
$\delta \le 2r(\delta)/\tilde L_n(\varepsilon_n)$, where $r(\delta)$ depends on $\delta$ only through the term
$\log\{np(\delta)\}$ with $p(\delta)=\lfloor(\overline x-\underline x)/\delta\rfloor+2$.
Because of the floor operator, $p(\delta)$ is a step function of $\delta$, and a closed-form solution for the largest admissible $\delta$ is generally inconvenient.

In practice, this issue is easily handled  as follows.
Let $|\mathcal{X}|\coloneqq\overline x-\underline x$ and consider the continuous proxy $\tilde p(\delta)\coloneqq|\mathcal{X}|/\delta$ (or $|\mathcal{X}|/\delta+2$), which removes the floor operator.
Substituting $\tilde p(\delta)$ into $r(\delta)$ yields a smooth implicit one-dimensional equation in $\delta$, which can be solved numerically by, for example, the bisection method.
Let $\tilde \delta$ denote a numerical solution of the equality $\delta = 2\tilde r(\delta)/\tilde L_n(\varepsilon_n)$,
where $\tilde r(\delta)$ is defined as $r(\delta)$ with $p(\delta)$ replaced by $\tilde p(\delta)$.
We then define an integer number of grid points by
\[
\hat p:=\left\lceil \frac{|\mathcal{X}|}{\tilde \delta}\right\rceil+2
\qquad\text{and set}\qquad
\hat\delta:=\frac{|\mathcal{X}|}{\hat p-1}.
\]
This rounding produces a slightly finer grid (smaller mesh size) and is therefore conservative for the purpose of satisfying the original inequality.
\end{rmk}

\begin{rmk}[Implementation Issue 3 : Simple Implementation] \label{rmk:simple}
In this remark, we discuss a order-based grid choice.
While not specifying the constants is not ideal from the viewpoint of practical transparency, it greatly simplifies implementation by providing an easy-to-use rule.
Throughout, we ignore logarithmic factors (e.g., those involving $\log(2m_n/\varepsilon_n)$ and $\log(np_n)$) and focus on dominant polynomial rates in $n$ and $h_n$.

\begin{itemize}
\item[(i)] \textbf{Validity only.}
To ensure first-order validity, it suffices to choose the grid so that the indicator term in
\cref{thm:boot_disc_error} vanishes, i.e., $L_n(\varepsilon_n)\Delta_n/2\lesssim r$.
Observe that $L_n(\varepsilon_n) $ and $r(\delta)$ satisfy $L_n(\varepsilon_n)\asymp h_n^{-1}$ and
$r(\delta)\asymp (nh_n)^{-1/4}$ (since $B_n^2\asymp h_n^{-1}$).
Therefore a natural order-based choice is
\[
\Delta_n =  \frac{2r}{L_n(\varepsilon_n)} c_{\Delta}n^{-\gamma} = c_\Delta n^{-1/4-\gamma}h_n^{3/4},
\]
or equivalently
\[
p_n =   \left\lceil \frac{|\mathcal{X}|}{\Delta_n} \right\rceil  = \left\lceil c_\Delta^{-1}|\mathcal{X}|n^{1/4 + \gamma}h_n^{-3/4} \right\rceil,
\]
with small constants $c_\Delta\in(0,1)$ and $\gamma>0$.
In practice, $\gamma$ should be very small.

\item[(ii)] \textbf{Discretization error smaller than the Gaussian approximation error.}
Under the same grid choice in (i), the indicator term in \cref{thm:boot_disc_error} eventually vanishes.
Moreover, choosing $\varepsilon_n=o(\rho_n)$ makes the rare-spike probability term negligible, where
\[
\rho_n:=\left(\frac{B_n^2\log^5(np_n)}{n}\right)^{1/4}.
\]
In the KDE example, $p_n$ in (i) grows at most polynomially in $n$, so $\log(np_n)=O(\log n)$ and $\rho_n\asymp (nh_n)^{-1/4}$ up to logarithmic factors.
Hence, once the grid is chosen for validity, the overall accuracy is driven by the high-dimensional approximation term rather than by discretization.

\end{itemize}
\end{rmk}


\section{Proofs} \label{sec:proof}

\subsection{Proof of \cref{thm:boot_disc_error}} \label{sec:proof_main}
First, observe that
\begin{align*}
    \rho^\star_n \coloneqq& \sup_{t\in\mathbb{R}}\left|\mathbb{P}^\star\left(\max_{1\le j\le p}|\hat{T}_n^\star(x_j)| \le t\mid X_{1:n}\right) - \mathbb{P}\left(\sup_{x\in\mathcal{X}}|\hat{T}_{n}(x)| \le t\right)\right| \\
    & \le \sup_{t\in\mathbb{R}}\left|\mathbb{P}^\star\left(\max_{1\le j\le p}|\hat{T}_n^\star(x_j)| \le t\mid X_{1:n}\right) - \mathbb{P}\left(\max_{1\le j \le p}|\hat{T}_{n}(x_j)| \le t\right)\right|  \\
    & \quad + \sup_{t\in\mathbb{R}}\left|\mathbb{P}\left(\max_{1\le j \le p}|\hat{T}_{n}(x_j)| \le t\right) - \mathbb{P}\left(\sup_{x\in\mathcal{X}}|\hat{T}_{n}(x)| \le t\right)\right|  \\
    & \eqqcolon (\rho^\star.  I) + (\rho^\star.II).
\end{align*}
Since $(\rho^\star.I) \le 1$, the bound $(\rho^\star.I) \le C\{n^{-1}B_n^2\log^5(np)\}^{1/4}$ is trivial when $n^{-1}B_n^2\log^5(np) > 1$, so we can assume $n^{-1}B_n^2\log^5(np) \le 1$ without loss of generality.
In the following evaluation, we often use this inequality.

First, in terms of $ (\rho^\star.  I)$, it holds that
\begin{align*}
    (\rho^\star.  I) &\le \sup_{t\in\mathbb{R}}\left|\mathbb{P}^\star\left(\max_{1\le j\le p}|T_n^\star(x_j)| \le t\mid X_{1:n}\right) - \mathbb{P}\left(\max_{1\le j \le p}|T_n(x_j)| \le t\right)\right| \nonumber\\
    & \quad + \sup_{t\in\mathbb{R}}\left|\mathbb{P}\left(\max_{1\le j\le p}|\hat{T}_n(x_j)| \le t\right) - \mathbb{P}\left(\max_{1\le j\le p}|{T}_n(x_j)| \le t\right)\right| \nonumber\\
    & \quad + \sup_{t\in\mathbb{R}}\left|\mathbb{P}^\star\left(\max_{1\le j\le p}|\hat{T}_n^\star(x_j)| \le t\mid X_{1:n}\right) - \mathbb{P}^\star\left(\max_{1\le j\le p}|T_n^\star(x_j)| \le t\mid X_{1:n}\right)\right|  \\
    & \coloneqq (\rho^\star.  I.I) + (\rho^\star.  I.II)  + (\rho^\star.  I.III),
\end{align*} 
with $T_n^\star(x) \coloneqq n^{-1}\sigma_n^{-1}\sum_{i=1}^n w_i^\star\{\psi_{h_n}(X_i,x) - \hat{f}_{h_n}(x)\}$.
From Lemma 4.6 in \cite{CCKK22}, we can see that
\begin{align}
    (\rho^\star.  I.I) = O_p\left( \left( \frac{B_n^2 \log^5(np)}{n}\right)^{1/4} \right).  \label{eq:high-dim-boot-error}
\end{align}
Also, from Lemma 15 in \cite[Chapter 2]{le1986asymptotic}, we have
\begin{align}
    (\rho^\star.  I.II) \le \sup_{t\in\mathbb{R}}\mathbb{P}\left(t< \max_{1\le j \le p}|T_n(x_j)| \le t + r  \right) + \sup_{t\in\mathbb{R}} \mathbb{P}\left(\max_{1\le j\le p}\left|\hat{T}_n(x_j) - T_n(x_j) \right| > r\right). \label{eq:rho_star_1-2}
\end{align}
with $r \coloneqq 2\left( n^{-1}B_n^2 \log^3(pn)\right)^{1/4}$.
In terms of the first term in \cref{eq:rho_star_1-2}, observe that
\begin{align*}
    & \sup_{t\in\mathbb{R}}\mathbb{P}\left(t < \max_{1\le j \le p}|T_n(x_j)| \le t + r  \right)  \\
    & \le \sup_{t\in\mathbb{R}}\mathbb{P}\left(t < \max_{1\le j \le p}|G_n(x_j)| \le t + r  \right) + \sup_{t\in\mathbb{R}}\left|\mathbb{P}\left(\max_{1\le j\le p}|T_n(x_j)| \le t\right) - \mathbb{P}\left(\max_{1\le j \le p}|G_n(x_j)| \le t\right)\right|.
\end{align*}
Nazarov's inequality gives
\begin{align*}
    & \sup_{t\in\mathbb{R}}\mathbb{P}\left(t < \max_{1\le j \le p}|G_n(x_j)| \le t + r \right)  \le r(\sqrt{2\log p} + 2) = O\left(\left( \frac{B_n^2 \log^5(np)}{n}\right)^{1/4}\right) ,
\end{align*}
and Theorem 2.1 in \cite{CCKK22} states that there exists a universal constant $C$ such that
\begin{align}
    \sup_{t\in\mathbb{R}}\left|\mathbb{P}\left(\max_{1\le j\le p}|T_n(x_j)| \le t\right) - \mathbb{P}\left(\max_{1\le j \le p}|G_n(x_j)| \le t\right)\right| \le  C\left( \frac{B_n^2 \log^5(np)}{n}\right)^{1/4}. \label{eq:hd-CLT}
\end{align}
Therefore
\begin{align}
    \sup_{t\in\mathbb{R}}\mathbb{P}\left(t < \max_{1\le j \le p}|T_n(x_j)| \le t + r  \right) = O\left( \left( \frac{B_n^2 \log^5(np)}{n}\right)^{1/4} \right). \label{eq:rho_star_1-2-1}
\end{align}
Next, we evaluate the second term in \cref{eq:rho_star_1-2}.
Since
\begin{align*}
    \max_{1\le j\le p}\left|\hat{T}_n(x_j) - T_n(x_j) \right| \le \max_{1\le j\le p}|T_n(x_j)|  \max_{1\le j\le p} \frac{|\hat{\sigma}^2_n(x_j) - \sigma_n^2(x_j)|}{\hat{\sigma}_n(x_j)|\hat{\sigma}_n(x_j) + \sigma_n(x_j)|},
\end{align*}
we can see that
\begin{align*}
    & \mathbb{P}\left(\max_{1\le j\le p}\left|\hat{T}_n(x_j) - T_n(x_j) \right| > r\right) \\
    & \le \mathbb{P}\left( \max_{1\le j\le p}|T_n(x_j)|  \max_{1\le j\le p} \frac{|\hat{\sigma}^2_n(x_j) - \sigma_n^2(x_j)|}{\hat{\sigma}_n(x_j)|\hat{\sigma}_n(x_j) + \sigma_n(x_j)|} > r\right) \\
    & \le \mathbb{P}\left( \max_{1\le j\le p}|T_n(x_j)|  \cdot C\sqrt{\frac{B_n^2 \log(pn)}{n}} > r\right) + \mathbb{P}\left( \max_{1\le j\le p} \frac{|\hat{\sigma}^2_n(x_j) - \sigma_n^2(x_j)|}{\hat{\sigma}_n(x_j)|\hat{\sigma}_n(x_j) + \sigma_n(x_j)|} > C\sqrt{\frac{B_n^2 \log(pn)}{n}} \right).
\end{align*}
Since the variance estimator $\hat{\sigma}_n^2(x)$ is defined as \cref{eq:variance_estimator}, under \cref{as:DGP_high-dim}(ii), Lemma 4.2 in \cite{CCKK22} gives
\begin{align*}
    & \mathbb{P}\left( \max_{1\le j\le p} \frac{|\hat{\sigma}^2_n(x_j) - \sigma_n^2(x_j)|}{\hat{\sigma}_n(x_j)|\hat{\sigma}_n(x_j) + \sigma_n(x_j)|} > C\sqrt{\frac{B_n^2 \log(pn)}{n}} \right)\\
    & \quad \le \frac{1}{n} + 3\left( \frac{B_n^2 \log^3(pn)}{n}\right)^{1/2} = O\left(\left( \frac{B_n^2 \log^5(np)}{n}\right)^{1/4}\right).
\end{align*}
Also, it holds that
\begin{align*}
    & \mathbb{P}\left( \max_{1\le j\le p}|T_n(x_j)|  \cdot C\sqrt{\frac{B_n^2 \log(pn)}{n}} > r\right) \\
    & =  1- \mathbb{P}\left( \max_{1\le j\le p}|T_n(x_j)|  \cdot C\sqrt{\frac{B_n^2 \log(pn)}{n}} \le r\right) \\
    & \le  \mathbb{P}\left( \max_{1\le j\le p}|G_n(x_j)| \cdot  C\sqrt{\frac{B_n^2 \log(pn)}{n}} > r\right)+ \sup_{t\in\mathbb{R}}\left| \mathbb{P}\left(\max_{1\le j\le p}|T_n(x_j)| \le t \right) - \mathbb{P}\left(\max_{1\le j\le p}|G_n(x_j)| \le t \right)\right|.
\end{align*}
Since, from the union bound and the standard property of the Gaussian tail, we have
\begin{align*}
     \mathbb{P}\left( \max_{1\le j\le p}|G_n(x_j)|\cdot C\sqrt{\frac{B_n^2 \log(pn)}{n}}  > r\right) \le 2p\exp\left( - \frac{r^2}{2} \cdot \left(  C\sqrt{\frac{B_n^2 \log(pn)}{n}} \right)^{-2} \right).
\end{align*}
From $r = 2\left( n^{-1}B_n^2 \log^3(pn)\right)^{1/4}$ and $B_n^2 \log^5(np)/ \le 1$ we have
\begin{align*}
    & 2p\exp\left( - \frac{r^2}{2} \cdot \left(  C\sqrt{\frac{B_n^2 \log(pn)}{n}} \right)^{-2} \right)  = O\left(\left( \frac{B_n^2 \log^5(np)}{n}\right)^{1/4}\right).
\end{align*}
In conjunction with \cref{eq:hd-CLT}, we have
\begin{align*}
    \mathbb{P}\left( \max_{1\le j\le p}|T_n(x_j)|  \cdot C\sqrt{\frac{B_n^2 \log(pn)}{n}} > r\right) =   O\left(\left( \frac{B_n^2 \log^5(np)}{n}\right)^{1/4}\right).
\end{align*}
Summing up
\begin{align}
    \mathbb{P}\left(\max_{1\le j\le p}\left|\hat{T}_n(x_j) - T_n(x_j) \right| > r\right) = O\left( \left( \frac{B_n^2 \log^5(np)}{n}\right)^{1/4} \right). \label{eq:rho_star_1-2-2}
\end{align}
Inserting the evaluation \cref{eq:rho_star_1-2-1} and \cref{eq:rho_star_1-2-2} to \cref{eq:rho_star_1-2}, we have
\begin{align}
    (\rho^\star.I.II)= O\left( \left( \frac{B_n^2 \log^5(np)}{n}\right)^{1/4} \right). \label{eq:rho_star_1-2_bound}
\end{align}
Similarly, we can see that
\begin{align}
    (\rho^\star.I.III)= O_p\left( \left( \frac{B_n^2 \log^5(np)}{n}\right)^{1/4} \right). \label{eq:rho_star_1-3_bound}
\end{align}
From \cref{eq:high-dim-boot-error}, \cref{eq:rho_star_1-2_bound} and \cref{eq:rho_star_1-3_bound}, we have
 \begin{align}
     (\rho^\star.  I) \le (\rho^\star.  I.I) + (\rho^\star.  I.II) +  + (\rho^\star.  I.III) =O_p\left(\left( \frac{B_n^2 \log^5(np)}{n}\right)^{1/4}\right). \label{eq:rho_star_1_error}
 \end{align}

In terms of $(\rho^\star.II)$,  from Lemma 15 in \cite[Chapter 2]{le1986asymptotic}, we have
\begin{align*}
    (\rho^\star.II) & \le \sup_{t\in\mathbb{R}} \mathbb{P}\left(t< \max_{1\le j \le p}|\hat{T}_{n}(x_j)| \le t + r\right) + \mathbb{P}\left(\sup_{x\in\mathcal{X}}|\hat{T}_{n}(x)| -  \max_{1\le j \le p}|\hat{T}_{n}(x_j)| > r \right) \\
    & \le  \sup_{t\in\mathbb{R}}\mathbb{P}\left(t < \max_{1\le j \le p}|G_{n}(x_j)| \le t + r\right) \\
    & \quad + \sup_{t\in\mathbb{R}}\left|\mathbb{P}\left(\max_{1\le j\le p}|\hat{T}_n(x_j)| \le t\right) - \mathbb{P}\left(\max_{1\le j \le p}|G_n(x_j)| \le t\right)\right| \\
    & \quad +\mathbb{P}\left(\sup_{x\in\mathcal{X}}|\hat{T}_{n}(x)| -  \max_{1\le j \le p}|\hat{T}_{n}(x_j)| > r\right) \\
    & \eqqcolon (\rho^\star.II.I) + (\rho^\star.II.II) + (\rho^\star.II.III).
\end{align*}
Nazarov's inequality gives
\begin{align*}
    (\rho^\star.II.I) \le r(\sqrt{2\log p} + 2) =O\left(\left( \frac{B_n^2 \log^5(np)}{n}\right)^{1/4}\right).
\end{align*}
Also, from \cref{eq:rho_star_1-2_bound} and \cref{eq:hd-CLT}, we have
\begin{align*}
    (\rho^\star.II.II) 
    & =  \sup_{t\in\mathbb{R}}\left|\mathbb{P}\left(\max_{1\le j\le p}|\hat{T}_n(x_j)| \le t\right) - \mathbb{P}\left(\max_{1\le j \le p}|G_n(x_j)| \le t\right)\right| \\
    & \le (\rho^\star.I.II)+ \sup_{t\in\mathbb{R}}\left|\mathbb{P}\left(\max_{1\le j\le p}|T_n(x_j)| \le t\right) - \mathbb{P}\left(\max_{1\le j \le p}|G_n(x_j)| \le t\right)\right| \\
    & = O\left(\left( \frac{B_n^2 \log^5(np)}{n}\right)^{1/4}\right).
\end{align*}
Finally, in terms of $ (\rho^\star.II.III)$, observe that
\begin{align*}
    0 \le \sup_{x\in\mathcal{X}}|\hat{T}_{n}(x)| -  \max_{1\le j \le p}|\hat{T}_{n}(x_j)| \le \sup_{\{(x,y):|x-y|\le \sup_{x\in\mathcal{X}}\min_{1\le j \le p} |x-x_j| \}} |\hat{T}_{n}(x) - \hat{T}_{n}(y)|.
\end{align*}
Recall that we define the maximum gap as $\Delta_n := \max_{1\le j\le p_n-1}(x_{j+1}-x_j)$. 
Then, it holds that
\begin{align*}
    & \sup_{\{(x,y):|x-y|\le \sup_{x\in\mathcal{X}}\min_{1\le j \le p} |x-x_j| \}} |\hat{T}_{n}(x) - \hat{T}_{n}(y)| \\
    & \quad \le  \sup_{x\in\mathcal{X}}\min_{1\le j \le p} |x-x_j| \cdot\sup_{\{(x,y):|x-y|\le \Delta_n\} }\frac{ |\hat{T}_{n}(x) - \hat{T}_{n}(y)|}{|x-y|}.
\end{align*}
This implies
\begin{align*}
     (\rho^\star.II.III) & \le \mathbb{P}\left( \sup_{x\in\mathcal{X}}\min_{1\le j \le p} |x-x_j| \cdot\sup_{\{(x,y):|x-y|\le \Delta_n\} }\frac{ |\hat{T}_{n}(x) - \hat{T}_{n}(y)|}{|x-y|} > r\right).
\end{align*}
Under Assumption \cref{as:DGP_high-dim}, it holds that
\begin{align*}
    \mathbb{P}\left( \sup_{\{(x,y):|x-y|\le \Delta_n } \frac{|\hat{T}_{n}(x) - \hat{T}_{n}(y)|}{|x-y|} > L_n \right) \to 0.
\end{align*}
Therefore
\begin{align*}
    & \mathbb{P}\left( \sup_{\{(x,y):|x-y|\le \Delta_n } |\hat{T}_{n}(x) - \hat{T}_{n}(y)| > r\right)\\
    & \le  \mathbb{P}\left( \sup_{\{(x,y):|x-y|\le \Delta_n } \frac{|\hat{T}_{n}(x) - \hat{T}_{n}(y)|}{|x-y|} > L_n \right) + 1\{L_n\ \sup_{x\in\mathcal{X}}\min_{1\le j \le p} |x-x_j| > r\} \\
    & \le \mathbb{P}\left( \sup_{\{(x,y):|x-y|\le \Delta_n } \frac{|\hat{T}_{n}(x) - \hat{T}_{n}(y)|}{|x-y|} > L_n \right) + 1\{L_n \Delta_n/2 > r\},
\end{align*}
where the final inequality follows from $\sup_{x\in\mathcal{X}}\min_{1\le j \le p} |x-x_j|\le \Delta_n/2 $.
Summing up, we have
\begin{align}
    & (\rho^\star.II) \nonumber\\
    &\le \mathbb{P}\left( \sup_{\{(x,y):|x-y|\le \Delta_n }\frac{|\hat{T}_{n}(x) - \hat{T}_{n}(y)|}{|x-y|} > L_n \right) + 1\{L_n\Delta_n/2 > r\} + O\left(\left( \frac{B_n^2 \log^5(np)}{n}\right)^{1/4}\right). \label{eq:rho_star_2_bound}
\end{align}

From \cref{eq:rho_star_1_error} and \cref{eq:rho_star_2_bound}, we have
\begin{align*}
    \rho^\star_n \le \mathbb{P}\left( \sup_{\{(x,y):|x-y|\le \Delta_n }\frac{|\hat{T}_{n}(x) - \hat{T}_{n}(y)|}{|x-y|} > L_n \right) + 1\{L_n\Delta_n /2> r\} + O_p\left(\left( \frac{B_n^2 \log^5(np)}{n}\right)^{1/4}\right). 
\end{align*}
In the same way as Step 3 in the proof of Theorem 2 in \cite{kato2018uniform}, we have
\begin{align*}
    & \left|\mathbb{P}\left( \sup_{x\in\mathcal{X}} |\hat{T}_n(x)| \le \hat{c}^\star_{n,1-\alpha} \right) - (1-\alpha) \right|\\
    & \le 3\Bigg\{\mathbb{P}\left( \sup_{\{(x,y):|x-y|\le \Delta_n } \frac{|\hat{T}_{n}(x) - \hat{T}_{n}(y)|}{|x-y|} > L_n \right) + 1\{L_n\Delta_n/2 > r\}\Bigg\} + O\left(\left( \frac{B_n^2 \log^5(np)}{n}\right)^{1/4}\right).
\end{align*}
This completes the proof.

\subsection{Proof of \cref{lem:KDE}} \label{sec:proof:lem:KDE}

\begin{proof}[Proof of \cref{lem:KDE}(i)]
Since
\begin{align*}
    \mathbb{E}[K^2_{i,h_n}(x)] = \frac{1}{h_n}\int K^2(u)f(x+uh_n)du, ~~~ \mathbb{E}[K_{i,h_n}(x)] = \int K(u)f(x+uh_n)du,
\end{align*}
, under \cref{as:KDE}, it holds that
\begin{align*}
   &  \sup_{x\in\mathcal{X}}\left| \Var[K_{i,h_n}(x)] - \frac{f(x)}{h_n}\int K^2(u)du\right| \\
   & = \sup_{x\in\mathcal{X}}\left| \mathbb{E}[K^2_{i,h_n}(x)] - \mathbb{E}[K_{i,h_n}(x)]^2 - \frac{1}{h_n}\int f(x)K^2(u)du\right| \\
   & \le \sup_{x\in\mathcal{X}}\left| \frac{1}{h_n} \int K^2(u)\{f(x+uh_n) - f(x)\} du \right| + \sup_{x\in\mathcal{X}}\left| \left( \int K(u)f(x+uh_n)du \right)^2\right| \\
   & \le \sup_{x\in\mathcal{X}}|f'(x)|\left|  \int |u|K^2(u) du \right|  + \sup_{x\in\mathcal{X}}\left| f(x)\right|^2 \left(\int |K(u)|du\right)^2 \\
   & = o(h_n^{-1}).
\end{align*}
\end{proof}

\begin{proof}[Proof of \cref{lem:KDE}(ii)]
First, from $c_r$ inequality and Jensen's inequality, we have
\begin{align*}
    \mathbb{E}[Y_{i,h_n}^4(x)] 
    & \le \frac{16}{h_n^4\Var[K_{i,h_n}(x)]^2} \int_{\mathcal{X}} K^4\left( \frac{t-x}{h_n} \right)f(t)dt \le \frac{16 \sup_{x\in\mathcal{X}}|f(x)|}{h_n^3 \Var[K_{i,h_n}(x)]^2} \int_{\mathbb{R}}K^4(u)du.
\end{align*}
Since $\sup_{u\in\mathbb{R}}|K(u)| < \infty$, the triangle inequality gives
\begin{align*}
     |Y_{i,h_n}(x)|= \frac{|K_{i,h_n}(x) - \mathbb{E}[K_{i,h_n}(x)]|}{\sqrt{\Var[K_{i,h_n}(x)]}} \le \frac{2\sup_{u\in\mathbb{R}}|K(u)|}{h_n\sqrt{\Var[K_{i,h_n}(x)]}}.
\end{align*}
Therefore, by definition of $\psi_1$-norm, we have
\begin{align*}
    \|Y_{i,h_n}(x)\|_{\psi_1} \le \frac{1}{\log 2} \cdot \frac{2\sup_{u\in\mathbb{R}}|K(u)|}{h_n\sqrt{\Var[K_{i,h_n}(x)]}}.
\end{align*}
\end{proof}

Before the proof of \cref{lem:KDE}(iii), we provide additional auxiliary results.
The proofs of these Lemmas are provided in \cref{subsec_proof:lem:dominated_convergence_theorem,subsec_proof:lem:variance_consistency,subsec_proof:lem:maximal_with_constant}.

\begin{lemma}\label{lem:dominated_convergence_theorem} \mbox{}
Under \cref{as:KDE}, it holds that
\begin{align*}
    \frac{\partial}{\partial x}\mathbb{E}[\hat{f}_{h_n}(x)] = \mathbb{E}\left[\frac{\partial}{\partial x}\hat{f}_{h_n}(x) \right], \quad \frac{\partial}{\partial x}\mathbb{E}\left[\frac{\partial}{\partial x}\hat{f}_{h_n}(x) \right] = \mathbb{E}\left[\frac{\partial^2}{\partial^2 x}\hat{f}_{h_n}(x) \right].
\end{align*}
\end{lemma}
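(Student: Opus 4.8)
The plan is to obtain both identities as two successive applications of differentiation under the integral sign, justified by the dominated convergence theorem, using only the smoothness and boundedness of the kernel $K$ from \cref{as:KDE}(iv) together with the fact that, by \cref{as:KDE}(i), $f$ is a probability density, so $\int_{\mathbb{R}} f(t)\,dt=1$. By the i.i.d.\ assumption and the linearity $\hat{f}_{h_n}(x)=n^{-1}\sum_{i=1}^n K_{i,h_n}(x)$, it suffices to establish the interchanges for a single summand; throughout I use the representation $\mathbb{E}[K_{i,h_n}(x)]=\int_{\mathbb{R}} h_n^{-1}K((t-x)/h_n)f(t)\,dt$.

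For the first identity, I would differentiate this integral in $x$ under the integral sign. For each fixed $t$ the integrand $x\mapsto h_n^{-1}K((t-x)/h_n)f(t)$ is continuously differentiable, with derivative $-h_n^{-2}K'((t-x)/h_n)f(t)$, since $K$ is continuously differentiable by \cref{as:KDE}(iv); and, by the mean value theorem, the associated difference quotients are dominated, uniformly in $x$, by the function $g_1(t):=h_n^{-2}\sup_{u\in\mathbb{R}}|K'(u)|\,f(t)$, which is integrable because $\sup_{u\in\mathbb{R}}|K'(u)|<\infty$ by \cref{as:KDE}(iv) and $\int_{\mathbb{R}} f=1$. The dominated convergence theorem then yields $\partial_x\mathbb{E}[K_{i,h_n}(x)]=-h_n^{-2}\int_{\mathbb{R}} K'((t-x)/h_n)f(t)\,dt=\mathbb{E}[-h_n^{-2}K'((X_i-x)/h_n)]=\mathbb{E}[\partial_x K_{i,h_n}(x)]$, using that $\partial_x K_{i,h_n}(x)=-h_n^{-2}K'((X_i-x)/h_n)$.

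For the second identity, I would repeat the argument starting from the representation just obtained, $\mathbb{E}[\partial_x\hat{f}_{h_n}(x)]=-h_n^{-2}\int_{\mathbb{R}} K'((t-x)/h_n)f(t)\,dt$. Its integrand is again continuously differentiable in $x$, with derivative $h_n^{-3}K''((t-x)/h_n)f(t)$, since $K''$ is continuous by \cref{as:KDE}(iv), and the associated difference quotients are dominated, uniformly in $x$, by $g_2(t):=h_n^{-3}\sup_{u\in\mathbb{R}}|K''(u)|\,f(t)$, which is integrable because $\sup_{u\in\mathbb{R}}|K''(u)|<\infty$ and $\int_{\mathbb{R}} f=1$. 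Differentiating under the integral sign then gives $\partial_x\mathbb{E}[\partial_x\hat{f}_{h_n}(x)]=h_n^{-3}\int_{\mathbb{R}} K''((t-x)/h_n)f(t)\,dt=\mathbb{E}[h_n^{-3}K''((X_i-x)/h_n)]=\mathbb{E}[\partial_x^2\hat{f}_{h_n}(x)]$, using $\partial_x^2 K_{i,h_n}(x)=h_n^{-3}K''((X_i-x)/h_n)$.

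I do not expect any genuine obstacle; the argument is routine. The two points that deserve a word of care are that the dominating functions $g_1$ and $g_2$ do not depend on $x$, so the interchange---and hence the derivative formulas---hold simultaneously for all $x\in\mathcal{X}$, including the one-sided derivatives at the endpoints $\ul{x}$ and $\ol{x}$; and that their integrability rests precisely on the boundedness of $K'$ and $K''$ in \cref{as:KDE}(iv) and on $f$ integrating to one. In particular, smoothness of $f$ plays no role in this lemma---only the twice continuous differentiability of $K$ and the normalization $\int_{\mathbb{R}} f=1$ are used.
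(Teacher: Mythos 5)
Your proposal is correct and follows the same strategy as the paper: both identities are obtained by differentiation under the integral sign, justified via the dominated convergence theorem, applied first to $\mathbb{E}[K_{i,h_n}(x)]$ and then to the resulting expression for $\mathbb{E}[\partial_x \hat f_{h_n}(x)]$. The only (cosmetic) difference is in the choice of dominating function: you bound the derivative of the integrand by $h_n^{-2}\sup_u|K'(u)|\,f(t)$ (resp.\ $h_n^{-3}\sup_u|K''(u)|\,f(t)$), which is integrable because $\int f=1$; the paper instead changes variables $u=(t-x)/h_n$ and bounds by $h_n^{-1}\sup_x|f(x)|\int|K'(u)|\,du$ (resp.\ $h_n^{-2}\sup_x|f(x)|\int|K''(u)|\,du$). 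Both routes rest on hypotheses supplied by \cref{as:KDE}(iv), so neither is preferable on substance; your version is marginally cleaner in that the dominating function manifestly does not depend on $x$ and never invokes boundedness of $f$ off $\mathcal{X}$, which is a small advantage in rigor. (As a side note, your second-derivative computation $\partial_x^2 K_{i,h_n}(x)=h_n^{-3}K''((X_i-x)/h_n)$ is the correct one; the paper's displayed constant there carries an extraneous factor, though this is immaterial to the domination argument.)
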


\begin{lemma} \label{lem:variance_consistency} \mbox{}
Under \cref{as:KDE}, 
\begin{itemize}
    \item[(i)] $\sup_{x\in\mathcal{X}}\left|\frac{\hat\sigma_n(x)}{\sigma_n(x)} -1\right| = o_p(1)$. 
    \item[(ii)] $\sup_{x\in\mathcal{X}}\left|\frac{\partial}{\partial x}\hat\sigma_n(x) - \frac{\partial}{\partial x}\sigma_n(x)\right| = o_p(1)$.
\end{itemize}
\end{lemma}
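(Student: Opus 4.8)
\quad
The plan is to deduce both parts from uniform-in-$x$ laws of large numbers for kernel-type averages. Write $\hat m_{1,n}(x):=\hat f_{h_n}(x)$ and $\hat m_{2,n}(x):=n^{-1}\sum_{i=1}^n K_{i,h_n}^2(x)$, with expectations $m_{1,n},m_{2,n}$, so that $\hat\sigma_n^2(x)=n^{-1}\{\hat m_{2,n}(x)-\hat m_{1,n}^2(x)\}$ and $\sigma_n^2(x)=n^{-1}\{m_{2,n}(x)-m_{1,n}^2(x)\}$, and recall $\sigma_n^2(x)\asymp (nh_n)^{-1}$ uniformly on $\mathcal X$ by \cref{as:KDE}(ii). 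For part (i) it suffices to show $\sup_{x\in\mathcal X}|\hat\sigma_n^2(x)-\sigma_n^2(x)|=o_p((nh_n)^{-1})$: dividing by $\sigma_n^2(x)$ gives $\sup_x|\hat\sigma_n^2(x)/\sigma_n^2(x)-1|=o_p(1)$, and, since $\hat\sigma_n^2\ge 0$, on the probability-approaching-one event where this is below $1/2$ one has $\hat\sigma_n>0$, whence the statement for the square roots follows from the identity $\hat\sigma_n/\sigma_n-1=(\hat\sigma_n^2/\sigma_n^2-1)/(\hat\sigma_n/\sigma_n+1)$ together with $\hat\sigma_n/\sigma_n+1\ge 1$. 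Now the first-moment piece is negligible: standard bias and uniform-consistency bounds for the kernel density estimator give $\sup_x|\hat m_{1,n}(x)-m_{1,n}(x)|=o_p(1)$ and $\sup_x m_{1,n}(x)=O(1)$, so $\sup_x|\hat m_{1,n}^2(x)-m_{1,n}^2(x)|=o_p(1)$ and $n^{-1}$ times it is $o_p(n^{-1})=o_p((nh_n)^{-1})$. For the second-moment piece, writing $\hat m_{2,n}(x)=h_n^{-1}\cdot n^{-1}\sum_i h_n^{-1}K^2((X_i-x)/h_n)$ exhibits the bracketed average as a kernel smoother with the bounded, integrable, Lipschitz ``kernel'' $K^2$ (\cref{as:KDE}(iv)); the standard uniform rate for kernel smoothers gives $\sup_x|\hat m_{2,n}(x)-m_{2,n}(x)|=h_n^{-1}\cdot O_p(\sqrt{\log(1/h_n)/(nh_n)})=o_p(h_n^{-1})$ under \cref{as:KDE}(iii). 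Dividing by $n$ yields the claimed $o_p((nh_n)^{-1})$ bound and hence part (i).

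\textbf{Part (ii).}\quad
Since $K\in C^2$ (\cref{as:KDE}(iv)), $\hat\sigma_n^2$ is $C^2$ on $\mathcal X$, so on the event $\inf_x\hat\sigma_n^2(x)>0$ (probability approaching one by part (i)) $\hat\sigma_n$ is differentiable with $\partial_x\hat\sigma_n=\partial_x\hat\sigma_n^2/(2\hat\sigma_n)$, and the analogous identity for $\sigma_n$ holds, differentiating under the expectation by \cref{lem:dominated_convergence_theorem} (and the same argument for $m_{2,n}$). I then decompose
\begin{align*}
\partial_x\hat\sigma_n(x)-\partial_x\sigma_n(x)
&=\frac{\partial_x\hat\sigma_n^2(x)-\partial_x\sigma_n^2(x)}{2\hat\sigma_n(x)}
+\partial_x\sigma_n^2(x)\left(\frac{1}{2\hat\sigma_n(x)}-\frac{1}{2\sigma_n(x)}\right).
\end{align*}
For the second term, \cref{as:KDE}(ii) gives $\partial_x\sigma_n^2(x)=2\sigma_n(x)\partial_x\sigma_n(x)=O((nh_n)^{-1})$ and $\sigma_n(x)^{-1}=O((nh_n)^{1/2})$ uniformly, while part (i) gives $\sup_x|\hat\sigma_n(x)^{-1}-\sigma_n(x)^{-1}|=\sup_x\sigma_n(x)^{-1}\cdot o_p(1)=o_p((nh_n)^{1/2})$, so this term is $o_p((nh_n)^{-1/2})=o_p(1)$. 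For the first term, $\sup_x\hat\sigma_n(x)^{-1}=O_p((nh_n)^{1/2})$ by part (i), so it remains to show $\sup_x|\partial_x\hat\sigma_n^2(x)-\partial_x\sigma_n^2(x)|=o_p((nh_n)^{-1})$. Differentiating through the kernel, $\partial_x\hat\sigma_n^2=n^{-1}\{\partial_x\hat m_{2,n}-2\hat m_{1,n}\partial_x\hat m_{1,n}\}$, where $\partial_x\hat m_{2,n}(x)=-2h_n^{-2}\cdot n^{-1}\sum_i h_n^{-1}(KK')((X_i-x)/h_n)$ and $\partial_x\hat m_{1,n}(x)=-h_n^{-1}\cdot n^{-1}\sum_i h_n^{-1}K'((X_i-x)/h_n)$, with the same structure for the population counterparts. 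Each bracketed average is again a kernel smoother, now with kernel $KK'$, resp.\ $K'$, both bounded, integrable and Lipschitz by \cref{as:KDE}(iv), so (by differencing the biases also cancel) the uniform deviation from the mean is $O_p(\sqrt{\log(1/h_n)/(nh_n)})$. The $\partial_x\hat m_{2,n}$ contribution is then of order $n^{-1}h_n^{-2}\sqrt{\log(1/h_n)/(nh_n)}=(nh_n)^{-1}\sqrt{\log(1/h_n)/(nh_n^3)}$, which is $o_p((nh_n)^{-1})$ precisely because $nh_n^3\to\infty$ (\cref{as:KDE}(iii)). The $\hat m_{1,n}\partial_x\hat m_{1,n}$ contribution, after the usual product-difference split $\hat m_{1,n}\partial_x\hat m_{1,n}-m_{1,n}\partial_x m_{1,n}=(\hat m_{1,n}-m_{1,n})\partial_x\hat m_{1,n}+m_{1,n}(\partial_x\hat m_{1,n}-\partial_x m_{1,n})$ and using $\sup_x|\hat m_{1,n}-m_{1,n}|=o_p(1)$, $\sup_x|\partial_x\hat m_{1,n}|=O_p(h_n^{-1})$, $\sup_x|m_{1,n}|=O(1)$ and $\sup_x|\partial_x\hat m_{1,n}-\partial_x m_{1,n}|=O_p(h_n^{-1}\sqrt{\log(1/h_n)/(nh_n)})=o_p(1)$, is $o_p(h_n^{-1})$, so $n^{-1}$ times it is $o_p((nh_n)^{-1})$. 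This proves part (ii).

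\textbf{Main obstacle.}\quad
The single non-routine ingredient is the uniform-in-$x$ law of large numbers, at rate $O_p(\sqrt{\log(1/h_n)/(nh_n)})$, for the kernel smoothers built from $K^2$, $K'$ and $KK'$; I would obtain it either by quoting standard uniform-in-bandwidth results for kernel estimators (of Einmahl--Mason / Hansen type), which apply since $K$ and its first two derivatives are bounded, integrable and of bounded variation by \cref{as:KDE}(iv), or by reusing the finite-net-plus-Bernstein-plus-Lipschitz argument already employed in the proof of \cref{lem:KDE}(iii), noting that $x\mapsto K^{(v)}((X_i-x)/h_n)$ has Lipschitz constant $O(h_n^{-1})$ per observation. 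Everything else is bookkeeping; the only delicate point is the arithmetic of bandwidth powers, namely that each derivative taken through the kernel costs a factor $h_n^{-1}$, which is exactly why part (ii) needs the stronger bandwidth condition $nh_n^3\to\infty$ of \cref{as:KDE}(iii) rather than the usual $nh_n\to\infty$, and why part (i) must be established before (and used inside) part (ii).
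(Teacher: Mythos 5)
Your proposal is correct in substance and takes a genuinely different, more elementary route than the paper's. The paper disposes of this lemma in two sentences: it notes that $\sup_x|\hat\sigma_n/\sigma_n-1|$ and $\sup_x|\partial_x\hat\sigma_n-\partial_x\sigma_n|$ can be represented as suprema of empirical processes indexed by VC-type function classes (appealing to Lemmas~2.6.15, 2.6.16, 2.6.18 of van der Vaart and Wellner), and then invokes the Talagrand-type local maximal inequality of Corollary~5.1 in \cite{chernozhukov2014gaussian}, deferring the detailed verification to the supplementary material of \cite{imai2025doubly}. You instead decompose $\hat\sigma_n^2-\sigma_n^2$ (and $\partial_x\hat\sigma_n^2-\partial_x\sigma_n^2$) into a handful of kernel-smoother averages built from $K^2$, $K'$ and $KK'$, reduce everything to uniform-in-$x$ laws of large numbers for these, and then carefully track bandwidth powers to show each piece is $o_p((nh_n)^{-1})$; the passage from squared to unsquared ratios via $\hat\sigma_n/\sigma_n-1=(\hat\sigma_n^2/\sigma_n^2-1)/(\hat\sigma_n/\sigma_n+1)$ with denominator $\ge 1$, and the two-term split $\partial_x\hat\sigma_n^2/(2\hat\sigma_n)$ plus $\partial_x\sigma_n^2(1/(2\hat\sigma_n)-1/(2\sigma_n))$, are both clean and correctly handled. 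Your approach makes explicit what the abstract VC machinery hides, could be closed by recycling the finite-grid--Bernstein--Lipschitz argument the paper itself uses to prove \cref{lem:maximal_with_constant}, and isolates exactly why $nh_n^3\to\infty$ is needed. The one caveat worth flagging is that your rate arithmetic (e.g.\ showing $(nh_n)^{-1}\sqrt{\log(1/h_n)/(nh_n^3)}=o_p((nh_n)^{-1})$) tacitly uses $\log(1/h_n)/(nh_n^3)\to 0$, a mild logarithmic strengthening of \cref{as:KDE}(iii); the paper's own Lemma~\ref{lem:maximal_with_constant} bound $A_1(\varepsilon)$ carries the same logarithmic factors, which are explicitly dropped in \cref{rem:negligible_L}, so this is a shared (and customary) simplification rather than a defect of your argument.
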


\begin{lemma} \label{lem:maximal_with_constant} \mbox{}
Under \cref{as:KDE}, 
\begin{itemize}
    \item[(i)] 
    for any $\varepsilon\in(0,1)$, it holds that
    \begin{align*}
        &\mathbb{P}\Bigg( \sup_{x\in\mathcal{X}} \left|\hat{f}_{h_n}(x) - \mathbb{E}\left[\hat f_{h_n}(x)\right]\right| > A_0(\varepsilon)  \Bigg)\le \varepsilon,
    \end{align*}
    \item[(ii)]
    for any $\varepsilon\in(0,1)$, it holds that
    \begin{align*}
        &\mathbb{P}\Bigg( \sup_{x\in\mathcal{X}} \left|\frac{\partial}{\partial x}\hat{f}_{h_n}(x) - \mathbb{E}\left[\frac{\partial}{\partial x}\hat f_{h_n}(x)\right]\right| > A_1(\varepsilon)\Bigg)  \le \varepsilon.
    \end{align*}
\end{itemize}
\end{lemma}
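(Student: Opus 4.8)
The plan is to establish both parts of \cref{lem:maximal_with_constant} by the same two-step scheme: a Bernstein-type concentration bound on an auxiliary fine grid, followed by a purely deterministic discretization argument based on an explicit Lipschitz constant. Note that the auxiliary grid used here, with $m_n$ points and spacing of order $n^{-1/2}h_n^{3/2}$, is internal to this proof and is \emph{not} the evaluation grid $\mathcal X_{\delta_n}$ of \cref{sec:main}. Concretely, I would take equispaced points $\ul x = z_1 < \dots < z_{m_n} = \ol x$ on $\mathcal X$ with consecutive gaps at most $n^{-1/2}h_n^{3/2}$; the stated choice $m_n = \lfloor n^{1/2}h_n^{-3/2}(\ol x - \ul x)\rfloor + 2$ guarantees this, and every $x\in\mathcal X$ then lies within $\tfrac12 n^{-1/2}h_n^{3/2}$ of some $z_k$.

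For the concentration step of part (i), write $\hat f_{h_n}(z_k) - \mathbb{E}[\hat f_{h_n}(z_k)] = n^{-1}\sum_{i=1}^n \{K_{i,h_n}(z_k) - \mathbb{E}[K_{i,h_n}(z_k)]\}$, a normalized sum of i.i.d.\ centered terms with $K_{i,h_n}(z_k) \le h_n^{-1}\sup_{u}|K(u)|$ and variance at most $\mathbb{E}[K_{i,h_n}^2(z_k)] = h_n^{-1}\int K^2(u) f(z_k + u h_n)\,du \le v_0$ (the last bound using $\sup_{x}|f(x)|<\infty$ from \cref{as:KDE}). Applying Bernstein's inequality in the one-sided Bennett form (which is what makes the constants reproduce $M_0$ and $v_0$) together with a union bound over the $m_n$ points, with the deviation level tuned to $\log(2m_n/\varepsilon)$, gives
\begin{align*}
    \mathbb{P}\left( \max_{1\le k \le m_n} \left| \hat f_{h_n}(z_k) - \mathbb{E}[\hat f_{h_n}(z_k)] \right| > \sqrt{\frac{2 v_0 \log(2m_n/\varepsilon)}{n}} + \frac{M_0 \log(2m_n/\varepsilon)}{3n} \right) \le \varepsilon .
\end{align*}
For part (ii), I would repeat the argument with $K'_{i,h_n}(x) \coloneqq \partial_x K_{i,h_n}(x) = - h_n^{-2} K'((X_i - x)/h_n)$ in place of $K_{i,h_n}(x)$; here $|K'_{i,h_n}(x)| \le h_n^{-2}\sup_u |K'(u)|$ and the variance is at most $h_n^{-3}\int \{K'(u)\}^2 f(x + u h_n)\,du \le v_1$, so the Bernstein--union bound yields the analogous grid-maximum bound with $(v_0, M_0)$ replaced by $(v_1, M_1)$.

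For the discretization step of part (i), observe that $x\mapsto \hat f_{h_n}(x) - \mathbb{E}[\hat f_{h_n}(x)]$ is Lipschitz with an explicit constant: since $\partial_x \hat f_{h_n}(x) = - n^{-1} h_n^{-2}\sum_i K'((X_i-x)/h_n)$ we have $|\partial_x \hat f_{h_n}(x)| \le h_n^{-2}\sup_u|K'(u)|$, and by \cref{lem:dominated_convergence_theorem} the derivative of $\mathbb{E}[\hat f_{h_n}(x)]$ equals $\mathbb{E}[\partial_x \hat f_{h_n}(x)]$ and satisfies the same bound; hence the centered process has Lipschitz constant at most $2 h_n^{-2}\sup_u|K'(u)|$. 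Combining this with the grid spacing,
\begin{align*}
    \sup_{x\in\mathcal X}\left|\hat f_{h_n}(x) - \mathbb{E}[\hat f_{h_n}(x)]\right| \le \max_{1\le k \le m_n}\left|\hat f_{h_n}(z_k) - \mathbb{E}[\hat f_{h_n}(z_k)]\right| + \tfrac12 n^{-1/2}h_n^{3/2}\cdot 2 h_n^{-2}\sup_u|K'(u)| ,
\end{align*}
and the last term equals $n^{-1/2}h_n^{-1/2}\sup_u|K'(u)|$, which is the third summand of $A_0(\varepsilon)$. Together with the concentration bound this gives $\sup_{x\in\mathcal X}|\hat f_{h_n}(x) - \mathbb{E}[\hat f_{h_n}(x)]| \le A_0(\varepsilon)$ on an event of probability at least $1-\varepsilon$, proving (i). For (ii), I would invoke \cref{lem:dominated_convergence_theorem} once more to see that $x\mapsto \partial_x \hat f_{h_n}(x) - \mathbb{E}[\partial_x \hat f_{h_n}(x)]$ is Lipschitz with constant at most $2 h_n^{-3}\sup_u|K''(u)|$ (using $\partial_x^2 \hat f_{h_n}(x) = n^{-1}h_n^{-3}\sum_i K''((X_i-x)/h_n)$ and again differentiating under the expectation), so the between-grid oscillation is at most $n^{-1/2}h_n^{-3/2}\sup_u|K''(u)|$; adding this to the grid bound yields the claimed $A_1(\varepsilon)$.

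The argument is elementary throughout; the points that need care are (a) picking the precise Bernstein/Bennett form so that its parameters match $M_0, v_0$ (resp.\ $M_1, v_1$) rather than a crudely doubled version, (b) the appeal to \cref{lem:dominated_convergence_theorem} to differentiate under the expectation, which is essential because the deterministic Lipschitz bound must control the whole centered process, including its mean part, not only its stochastic fluctuation, and (c) checking that the floor in the definition of $m_n$ really produces a mesh no larger than $n^{-1/2}h_n^{3/2}$. I do not expect a genuine obstacle: the only real idea is the clean split into Bernstein-on-a-grid plus deterministic Lipschitz discretization, with the auxiliary grid deliberately taken fine enough (spacing $n^{-1/2}h_n^{3/2}$) that the oscillation term is exactly of the size of the third summand of $A_0(\varepsilon)$, respectively $A_1(\varepsilon)$.
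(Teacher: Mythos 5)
Your proposal is correct and follows essentially the same path as the paper's own proof: a Bernstein concentration bound on an auxiliary equispaced grid of $m_n$ points with spacing $n^{-1/2}h_n^{3/2}$, a union bound tuned to $t=\log(2m_n/\varepsilon)$, and a deterministic Lipschitz (mean value theorem) discretization step whose Lipschitz constant $2h_n^{-2}\sup_u|K'(u)|$ (resp.\ $2h_n^{-3}\sup_u|K''(u)|$) comes from bounding $\partial_x\hat f_{h_n}(x)$ and applying \cref{lem:dominated_convergence_theorem} to swap $\partial_x$ and $\mathbb{E}$. The only cosmetic point is that you phrase the envelope bound $K_{i,h_n}(z_k)\le h_n^{-1}\sup_u|K(u)|$ one-sidedly, but since $|K|$ is uniformly bounded the two-sided bound $|K_{i,h_n}(z_k)-\mathbb{E}[K_{i,h_n}(z_k)]|\le 2h_n^{-1}\sup_u|K(u)|=M_0$ holds immediately and the standard two-sided Bernstein inequality (as used in the paper) then yields the tail $2\exp(-t)$ with parameters $(v_0,M_0)$; no special one-sided Bennett variant is needed.
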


\begin{proof}[Proof of \cref{lem:KDE}(iii)]
From the mean-value theorem, we can see that
\begin{align*}
    \sup_{\{(x,y):|x-y|\le \Delta_n } \frac{|\hat{T}_n(x) - \hat{T}_n(y)|}{|x-y|}  \le  \sup_{x\in\mathcal{X}}\left|\frac{\partial}{\partial x}\hat{T}_n(x)\right| .
\end{align*}
This implies
\begin{align*}
     \mathbb{P}\left(\sup_{\{(x,y):|x-y|\le \Delta_n } \frac{|\hat{T}_n(x) - \hat{T}_n(y)|}{|x-y|} > L_n\right) \le \mathbb{P}\left(\sup_{x\in\mathcal{X}}\left|\frac{\partial}{\partial x}\hat{T}_n(x)\right| > L_n\right).
\end{align*}
Observe that
\begin{align*}
    \frac{\partial}{\partial x}\hat{T}_n(x) \coloneqq \frac{1}{\hat\sigma_n(x)}\left( \frac{\partial}{\partial x}\hat{f}_{h_n}(x) - \frac{\partial}{\partial x}\mathbb{E}[\hat f_{h_n}(x)] \right) - \frac{\hat{f}_{h_n}(x) - \mathbb{E}[\hat f_{h_n}(x)]}{\hat\sigma^2_n(x)} \partial_x\hat{\sigma}_n(x).
\end{align*}
From Lemma \cref{lem:dominated_convergence_theorem}, we can work with 
\begin{align*}
    \frac{\partial}{\partial x}\hat{T}_n(x) = \frac{1}{\hat\sigma_n(x)}\left( \frac{\partial}{\partial x}\hat{f}_{h_n}(x) - \mathbb{E}\left[\frac{\partial}{\partial x}\hat f_{h_n}(x)\right] \right) - \frac{\hat{f}_{h_n}(x) - \mathbb{E}[\hat f_{h_n}(x)]}{\hat\sigma^2_n(x)} \partial_x\hat{\sigma}_n(x).
\end{align*}
From \cref{lem:variance_consistency}, it holds that
\begin{align*}
    & \sup_{x\in\mathcal{X}}\left| \frac{\partial}{\partial x}\hat{T}_n(x) \right| \\
    &\le \{1+o_p(1)\}\sup_{x\in\mathcal{X}}\frac{1}{\sigma_n(x)} \left| \frac{\partial}{\partial x}\hat{f}_{h_n}(x) - \mathbb{E}\left[\frac{\partial}{\partial x}\hat f_{h_n}(x)\right]\right| 
     \\
     & \quad\quad\quad+ \{1+o_p(1)\}\sup_{x\in\mathcal{X}}\frac{\partial\sigma_n(x)}{\partial x}\frac{1}{\sigma^2_n(x)} \left| \hat{f}_{h_n}(x) - \mathbb{E}\left[\hat f_{h_n}(x)\right]\right| \\
     & \le 2\left( \sup_{x\in\mathcal{X}}\frac{1}{\sigma_n(x)} \left| \frac{\partial}{\partial x}\hat{f}_{h_n}(x) - \mathbb{E}\left[\frac{\partial}{\partial x}\hat f_{h_n}(x)\right]\right|  + \sup_{x\in\mathcal{X}}\frac{\partial\sigma_n(x)}{\partial x}\frac{1}{\sigma^2_n(x)} \left| \hat{f}_{h_n}(x) - \mathbb{E}\left[\hat f_{h_n}(x)\right]\right| \right),
\end{align*}
with probability approaching to $1$.
Taking 
\begin{align*}
    L_n = 2\left(\sup_{x\in\mathcal{X}}\frac{1}{\sigma_n(x)} A_1\left( \frac{\varepsilon}{2} \right) + \sup_{x\in\mathcal{X}}\frac{\partial\sigma_n(x)}{\partial x}\frac{1}{\sigma^2_n(x)} A_0\left( \frac{\varepsilon}{2} \right)  \right),
\end{align*}
then it holds that
\begin{align*}
    & \Bigg\{ \sup_{x\in\mathcal{X}} \left| \frac{\partial}{\partial x}\hat{f}_{h_n}(x) - \mathbb{E}\left[\frac{\partial}{\partial x}\hat f_{h_n}(x)\right]\right| \le A_1\left( \frac{\varepsilon}{2}\right), \\
    & \quad\quad\sup_{x\in\mathcal{X}} \left| \hat{f}_{h_n}(x) - \mathbb{E}\left[\hat f_{h_n}(x)\right]\right| \le A_0\left( \frac{\varepsilon}{2} \right)\Bigg\} \implies \sup_{x\in\mathcal{X}}\left| \frac{\partial}{\partial x}\hat{T}_n(x) \right| \le L_n,
\end{align*}
with probability approaching to $1$.
Therefore, the union bound and \cref{lem:maximal_with_constant} give
\begin{align*}
    & \mathbb{P}\left(\sup_{x\in\mathcal{X}}\left|\frac{\partial}{\partial x}\hat{T}_n(x)\right| > L_n\right) \\
    & \le \mathbb{P}\left(  \sup_{x\in\mathcal{X}} \left| \frac{\partial}{\partial x}\hat{f}_{h_n}(x) - \mathbb{E}\left[\frac{\partial}{\partial x}\hat f_{h_n}(x)\right]\right| \le A_1\left( \frac{\varepsilon}{2}\right)\right) \\
    & \quad + \mathbb{P}\left(\sup_{x\in\mathcal{X}} \left| \hat{f}_{h_n}(x) - \mathbb{E}\left[\hat f_{h_n}(x)\right]\right| \le A_0\left( \frac{\varepsilon}{2} \right) \right) + o(1) \le \varepsilon + o(1).
\end{align*}
This implies that we can take
\begin{align*}
    L_n = 2\left(\sup_{x\in\mathcal{X}}\frac{1}{\sigma_n(x)} A_1\left( \frac{\varepsilon}{2} \right) + \sup_{x\in\mathcal{X}}\frac{\partial\sigma_n(x)}{\partial x}\frac{1}{\sigma^2_n(x)} A_0\left( \frac{\varepsilon}{2} \right)  \right),
\end{align*}
so that
\begin{align*}
    \mathbb{P}\left(\sup_{\{(x,y):|x-y|\le \Delta_n } \frac{|\hat{T}_n(x) - \hat{T}_n(y)|}{|x-y|} > L_n\right) \le \varepsilon + o(1).
\end{align*}
\end{proof}

\subsubsection{Proof of \cref{lem:dominated_convergence_theorem}} \label{subsec_proof:lem:dominated_convergence_theorem}

\begin{proof}[Proof of \cref{lem:dominated_convergence_theorem}(i)] 
Observe that
\begin{align*}
    \mathbb{E}[\hat{f}_{h_n}(x)]  = \int \frac{1}{h_n}K\left( \frac{t-x}{h_n}\right)f(t)dt,
\end{align*}
and the derivative of the integrant is given by
\begin{align*}
    \frac{\partial}{\partial x}\frac{1}{h_n}K\left( \frac{t-x}{h_n}\right)f(t) = -\frac{1}{h_n^2}K'\left( \frac{t-x}{h_n}\right)f(t).
\end{align*}
Then, under \cref{as:KDE}, it holds that
\begin{align*}
    & \int \left| \frac{\partial}{\partial x}\frac{1}{h_n}K\left( \frac{t-x}{h_n}\right)f(t)\right| dt \le \frac{\sup_{x\in\mathcal{X}}|f(x)|}{h_n} \int \left|K'\left( u\right)\right|du <\infty.
\end{align*}
Therefore, the dominated convergence theorem completes the proof. 
\end{proof}

\begin{proof}[Proof of \cref{lem:dominated_convergence_theorem}(ii)] 
Similarly to the proof of \cref{lem:dominated_convergence_theorem}(i), 
\begin{align*}
    \mathbb{E}\left[ \frac{\partial}{\partial x} \hat{f}_{h_n}(x) \right] = \int -\frac{1}{h_n^2}K'\left( \frac{t-x}{h_n}\right)f(t),
\end{align*}
and the derivative of the integrant is given by
\begin{align*}
    \frac{\partial}{\partial x}\left\{-\frac{1}{h_n^2}K'\left( \frac{t-x}{h_n}\right)f(t)\right\} = \frac{2}{h_n^3}K''\left( \frac{t-x}{h_n}\right)f(t).
\end{align*}
Then, under \cref{as:KDE}, it holds that
\begin{align*}
    \int \left| \frac{\partial}{\partial x}\left\{-\frac{1}{h_n^2}K'\left( \frac{t-x}{h_n}\right)f(t)\right\} \right|dt \le \frac{\sup_{x\in\mathcal{X}}|f(x)|}{h_n^2} \int \left|K''\left( u\right)\right|du <\infty.
\end{align*}
Therefore, the dominated convergence theorem completes the proof. 
\end{proof}

\subsubsection{Proof of \cref{lem:variance_consistency}} \label{subsec_proof:lem:variance_consistency}
\begin{proof}[Proof of \cref{lem:variance_consistency}]
Likewise the proof of Lemma 6 and 7 in the supplementary material of \cite{imai2025doubly}, under \cref{as:KDE},  we can show that $\sup_{x\in\mathcal{X}}\left|\frac{\hat\sigma_n(x)}{\sigma_n(x)} -1\right|$ and $\sup_{x\in\mathcal{X}}\left|\frac{\partial}{\partial x}\hat\sigma_n(x) - \frac{\partial}{\partial x}\sigma_n(x)\right|$ are suprema of empirical process index by the VC-type class using Lemmas 2.6.15, 2.6.16 and 2.6.18 of \cite{van1996weak} (See Definition 2.1 of \citealp{chernozhukov2014gaussian} for the definition of VC-type class).
Then, we can show that the statements hold using Corollary 5.1 in \cite{chernozhukov2014gaussian}.
\end{proof}

\subsubsection{Proof of \cref{lem:maximal_with_constant}} \label{subsec_proof:lem:maximal_with_constant}
First we show \cref{lem:maximal_with_constant}(ii), then show \cref{lem:maximal_with_constant}(i).
\begin{proof}[Proof of \cref{lem:maximal_with_constant}(ii)]
Define $D_n(x) \coloneqq \frac{\partial}{\partial x}\hat{f}_{h_n}(x) - \mathbb{E}\left[\frac{\partial}{\partial x}\hat f_{h_n}(x)\right]$.
Then
\begin{align*}
    D_n(x) = \frac{1}{n}\sum_{i=1}^n \xi_i(x),  \text{ with } \xi_i(x)\coloneqq - \frac{1}{h_n^2}K'_{i,h_n}(x) - \mathbb{E}\left[ - \frac{1}{h_n^2}K'_{i,h_n}(x) \right].
\end{align*}

For each $x\in\mathcal{X}$, we can see that $|\xi_i(x)| \le 2h_n^{-2}\sup_{u\in\mathbb{R}}|K'(u)| \eqqcolon M_1$ and
\begin{align*}
     \Var[\xi_i(x)] 
    &\le \mathbb{E}\left[ \left\{ \frac{1}{h_n^2} K'_{i,h_n}(x)\right\}^2 \right] \\
    & = \frac{1}{h_n^4} \int (K')^2\left( \frac{t-x}{h_n} \right)f(t)dt \le \frac{\sup_{x\in\mathcal{X}}|f(x)|}{h_n^3} \int_{\mathbb{R}} \{K'(u)\}^2 du \eqqcolon v_1.
\end{align*}
Then, for any $t>0$, Bernstein's inequality gives,  
\begin{align}
   \mathbb{P}\left( |D_n(x)| > \sqrt{\frac{2v_1t}{n}} + \frac{M_1t}{3n}\right) \le 2\exp(-t). \label{eq:uni_bernstein}
\end{align}
Define $\eta_n \coloneqq n^{-1/2}h_n^{3/2}$ and the grid points $\{x_k\}_{k=1}^{m_{1,n}}$ over $\mathcal{X}$ as
\begin{align*}
    x_k \coloneqq \ul{x} + (k-1)\eta_n, ~~ k =1,\dots, m_{1,n}-1 ~~ x_{m_{1,n}} \coloneqq \ol{x}, ~~ m_{1,n} \coloneqq \left\lfloor \frac{|\ol{x} - \ul{x}|}{\eta_n} \right\rfloor +2.
\end{align*}
Then the union bound and \cref{eq:uni_bernstein} give
\begin{align*}
    \mathbb{P}\left( \max_{1\le k \le m_{1,n}}|D_n(x_k)| > \sqrt{\frac{2v_1t}{n}} + \frac{M_1t}{3n}\right) \le 2m_{1,n}\exp(-t).
\end{align*}
Taking $t_{n}(\varepsilon) \coloneqq \log(2m_{1,n}/\varepsilon)$ then, since $2m_{1,n}\exp(-t_{n}(\varepsilon)) = \varepsilon$, 
\begin{align}
    \mathbb{P}\left( \max_{1\le k \le m_{1,n}}|D_n(x_k)| > \sqrt{\frac{2v_1t_n(\varepsilon)}{n}} + \frac{M_1t_n(\varepsilon)}{3n}\right) \le \varepsilon. \label{eq:D-bernstein}
\end{align}

For any $x\in\mathcal{X}$, there exists $k \in \{1,\dots, m_{1,n}\}$ such that $|x - x_{k}| \le \eta_n/2 = n^{-1/2}h_n^{3/2}/2$. 
Using Lemma \cref{lem:dominated_convergence_theorem}, we can see that $|\partial D_n(x)/\partial x| \le 2h_n^{-3}\sup_{u\in\mathbb{R}}|K''(u)|$.  
Therefore mean value theorem, we have
\begin{align*}
    & |D_n(x) - D_n(x_k)| \\
    & \le |x-x_k| \sup_{x\in\mathcal{X}}\left| \frac{\partial}{\partial x}D_n(x)\right| \le \frac{h_n^{3/2}}{2n^{1/2}} \cdot \frac{2\sup_{u\in\mathbb{R}}|K''(u)|}{h_n^3} = \frac{\sup_{u\in\mathbb{R}}|K''(u)|}{n^{1/2}h_n^{3/2}}.
\end{align*}
Therefore
\begin{align}
    \sup_{x\in\mathcal{X}}|D_n(x)| \le \max_{1\le k \le m_{1,n}} |D_n(x_k)| + \frac{\sup_{u\in\mathbb{R}}|K''(u)|}{n^{1/2}h_n^{3/2}}. \label{eq:D-mean-value}
\end{align}
From \cref{eq:D-bernstein} and \cref{eq:D-mean-value}, we have
\begin{align*}
    & \mathbb{P}\left( \sup_{x\in\mathcal{X}}|D_n(x)|  > \sqrt{\frac{2v_1\log(2m_{1,n}/\varepsilon)}{n}} + \frac{M_1\log(2m_{1,n}/\varepsilon)}{3n} +  \frac{\sup_{u\in\mathbb{R}}|K''(u)|}{n^{1/2}h_n^{3/2}} \right)  \le \varepsilon.
\end{align*}
\end{proof}

\begin{proof}[Proof of \cref{lem:maximal_with_constant}(i)]
The proof is identical to that of part (ii) by replacing $K''$ with $K'$, $K'$ with $K$ and adjusting the scaling accordingly.
Specifically, define $D_{0,n}(x) \coloneqq \hat{f}_{h_n}(x) - \mathbb{E}[\hat{f}_{h_n}(x)] = n^{-1}\sum_{i=1}^n \xi_{0,i}(x)$ with $\xi_{0,i}(x) \coloneqq K_{i,h_n}(x) - \mathbb{E}[K_{i,h_n}(x)]$.
Then $|\xi_{0,i}(x)| \le M_0$ and $\Var[\xi_{0,i}(x)] \le v_0$.
Applying Bernstein’s inequality on the auxiliary grid with mesh size $\eta_{n,0}\coloneqq n^{-1/2}h_n^{3/2}$ and using the mean value theorem with the bound $\sup_{x\in\mathcal{X}}|\partial D_{n,0}(x)/\partial x| \le 2h_n^{-2}\sup_{u\in\mathbb{R}}|K'(u)|$ gives the claim.
\end{proof}

\bibliography{ref_FINE}

@book{wasserman2006all,
  title={All of nonparametric statistics},
  author={Wasserman, Larry},
  year={2006},
  publisher={Springer Science \& Business Media}
}

@book{tsybakov2008nonparametric,
  title={Introduction to Nonparametric Estimation},
  author={Tsybakov, Alexandre B},
  year={2009},
  publisher={Springer New York, NY}
}

@book{li2007nonparametric,
  title={Nonparametric econometrics: theory and practice},
  author={Li, Qi and Racine, Jeffrey Scott},
  year={2007},
  publisher={Princeton University Press}
}

@article{bickel1973some,
  title={On some global measures of the deviations of density function estimates},
  author={Bickel, Peter J and Rosenblatt, Murray},
  journal={The Annals of Statistics},
  pages={1071--1095},
  year={1973},
  publisher={JSTOR}
}

@article{imai2025doubly,
  title={Doubly Robust Uniform Confidence Bands for Group-Time Conditional Average Treatment Effects in Difference-in-Differences},
  author={Imai, Shunsuke and Qin, Lei and Yanagi, Takahide},
  journal={Journal of Business \& Economic Statistics},
  volume={forthcoming},
  year={2025},
  publisher={Taylor \& Francis}
}

@article{lee2017doubly,
  title={Doubly robust uniform confidence band for the conditional average treatment effect function},
  author={Lee, Sokbae and Okui, Ryo and Whang, Yoon-Jae},
  journal={Journal of Applied Econometrics},
  volume={32},
  number={7},
  pages={1207--1225},
  year={2017},
  publisher={Wiley Online Library}
}

@article{fan2022estimation,
  title={Estimation of conditional average treatment effects with high-dimensional data},
  author={Fan, Qingliang and Hsu, Yu-Chin and Lieli, Robert P and Zhang, Yichong},
  journal={Journal of Business \& Economic Statistics},
  volume={40},
  number={1},
  pages={313--327},
  year={2022},
  publisher={Taylor \& Francis}
}

@book{van1996weak,
  title={Weak Convergence and Empirical Processes: With Applications to Statistics},
  author={van der Vaart, Aad and Wellner, Jon A},
  year={1996},
  publisher={Springer}
}

@article{chernozhukov2014anti,
  title={Anti-concentration and honest, adaptive confidence bands},
  author={Chernozhukov, Victor and Chetverikov, Denis and Kato, Kengo},
  journal={The Annals of Statistics},
  volume={42},
  number={5},
  pages={1787},
  year={2014},
  publisher={Institute of Mathematical Statistics}
}

@article{chernozhukov2014gaussian,
  title={Gaussian approximation of suprema of empirical processes},
  author={Chernozhukov, Victor and Chetverikov, Denis and Kato, Kengo},
  journal={The Annals of Statistics},
  volume={42},
  number={4},
  pages={1564--1597},
  year={2014}
}

@article{chernozhukov2016empirical,
  title={Empirical and multiplier bootstraps for suprema of empirical processes of increasing complexity, and related Gaussian couplings},
  author={Chernozhukov, Victor and Chetverikov, Denis and Kato, Kengo},
  journal={Stochastic Processes and their Applications},
  volume={126},
  number={12},
  pages={3632--3651},
  year={2016},
  publisher={Elsevier}
}

@article{le1986asymptotic,
  title={Asymptotic Methods in Statistical Decision Theory},
  author={Le Cam, Lucien},
  journal={Springer Series in Statistics},
  year={1986},
  publisher={Springer New York}
}

@article{kato2018uniform,
  title={Uniform confidence bands in deconvolution with unknown error distribution},
  author={Kato, Kengo and Sasaki, Yuya},
  journal={Journal of Econometrics},
  volume={207},
  number={1},
  pages={129--161},
  year={2018},
  publisher={Elsevier}
}

@article{chernozhukov2023nearly,
  title={Nearly optimal central limit theorem and bootstrap approximations in high dimensions},
  author={Chernozhukov, Victor and Chetverikov, Denis and Koike, Yuta},
  journal={The Annals of Applied Probability},
  volume={33},
  number={3},
  pages={2374--2425},
  year={2023},
  publisher={Institute of Mathematical Statistics}
}

@article{ChKa20,
  title={Jackknife multiplier bootstrap: finite sample approximations to the {$U$}-process supremum with applications},
  author={Chen, Xiaohui and Kato, Kengo},
  journal={Probability Theory and Related Fields},
  volume={176},
  pages={1097--1163},
  year={2020},
  publisher={Springer}
}

@article{CCKK22,
	author = {Chernozhukov, Victor and Chetverikov, Denis and Kato, Kengo and Koike, Yuta},
	journal = {Annals of Statistics},
	number = {5},
	pages = {2562--2586},
	title = {Improved central limit theorem and bootstrap approximation in high dimensions},
	volume = {50},
	year = {2022}}

@article{chernozhukov2023high,
  title={High-dimensional data bootstrap},
  author={Chernozhukov, Victor and Chetverikov, Denis and Kato, Kengo and Koike, Yuta},
  journal={Annual Review of Statistics and Its Application},
  volume={10},
  number={1},
  pages={427--449},
  year={2023},
  publisher={Annual Reviews}
}
\bibliographystyle{apalike}

\end{document}